\pdfoutput=1
\documentclass{amsart}
\usepackage{graphicx,euscript,verbatim,constants}
\usepackage{amssymb,url,booktabs}

\newconstantfamily{const}{symbol=c}
\renewcommand{\c}[1]{\Cl[const]{#1}}
\newcommand{\nc}{\newcommand}
\newtheorem{Thm}{Theorem}
\newtheorem{Cor}[Thm]{Corollary}
\newtheorem{Claim}[Thm]{Claim}
\newtheorem{lemma}[Thm]{Lemma}

\nc{\wt}{\widetilde}
\nc{\izf}{\int_{0}^{\infty}}
\nc{\constc}{c}
\nc{\constcp}{c'}
\nc{\J}{\eu{J}}
\nc{\tB}{\wt{B}}
\nc{\tf}{\tilde{f}}
\nc{\tD}{\wt{\Delta}}
\nc{\U}{\mathcal{U}}
\nc{\V}{\mathcal{V}}
\nc{\tU}{\check{U}}
\nc{\tV}{\check{V}}
\nc{\tN}{\wt{N}}
\nc{\tY}{\wt{Y}}
\nc{\R}{\mathbb{R}}
\nc{\F}{\eu{F}}
\newcommand{\tX}{\wt{X}}

\newcommand{\E}{\mathbb{E}}
\normalsize

\nc{\tP}{\wt{P}}
\renewcommand{\P}{\mathbb{P}}
\nc{\tq}{\tilde{q}}
\nc{\tZ}{\wt{Z}}
\nc{\tW}{\wt{W}}
\nc{\Xps}{X^{(\epsilon)}}
\nc{\Yps}{Y^{(\epsilon)}}
\nc{\aps}{a^{(\epsilon)}}
\nc{\e}{\operatorname{e}}
\nc{\bX}{\mathbf{X}}
\nc{\tmu}{\tilde{\mu}}
\nc{\tx}{\tilde{\xi}}
\nc{\ttau}{\tilde{\tau}}
\nc{\tsig}{\tilde{\sigma}}
\nc{\hA}{\hat{A}}
\nc{\hR}{\hat{R}}
\nc{\ha}{\hat{a}}
\nc{\hD}{\hat{D}^{*}}
\nc{\bA}{\mathbf{A}}

\nc{\M}{\eu{M}}
\nc{\eu}{\EuScript}
\nc{\cond}{\, \bigm|\,}
\newcommand{\convinfty}[1]
  {\stackrel{ #1\to\infty}{-\hspace{-2mm}-\hspace{-2mm}
  -\hspace{-2mm}-\hspace{-4mm}\longrightarrow}}
\newcommand{\indic}{\boldsymbol{1}}

\begin{document}
\title[Growth rates with rare migration]{Stochastic growth rates for life histories with rare migration or diapause}
\author{David Steinsaltz} 
\author{Shripad Tuljapurkar}
\address{David Steinsaltz\\Department of Statistics\\University of Oxford\\1 South Parks Road\\Oxford OX1 3TG\\United Kingdom}
 \address{Shripad Tuljapurkar\\454 Herrin Labs\\Department of Biology\\Stanford University\\Stanford CA 94305-5020\\USA}

\begin{abstract}
The growth of a population divided among spatial sites, with migration between the sites, is sometimes modelled by
a product of random matrices, with each diagonal elements representing the growth rate in a given time period, and 
off-diagonal elements the migration rate. If the sites are reinterpreted as age classes, the same model may apply to
a single population with age-dependent mortality and reproduction.

We consider the case where the off-diagonal elements are small, representing a situation where there is little migration or,
alternatively, where a deterministic life-history has been slightly disrupted, for example by introducing a rare delay in development.
We examine the asymptotic behaviour of the long-term growth rate. We show that when the highest growth rate is attained at two
different sites in the absence of migration
(which is always the case when modelling a single age-structured population) the increase in stochastic growth rate due to a migration
rate $\epsilon$ is like $(\log \epsilon^{-1})^{-1}$ as $\epsilon\downarrow 0$, under fairly generic conditions. When there is a single site
with the highest growth rate the behavior is more delicate, depending on the tails of the growth rates. For the case when the log growth rates have Gaussian-like tails we show that the behavior near
zero is like a power of $\epsilon$, and derive upper and lower bounds for the power in terms of the difference in the growth rates
and the distance between the sites.
\end{abstract}
\maketitle

\section{Introduction}
\subsection{Biological motivation}
If a population is divided among spatial sites, with distinct fixed growth rates at each site, 
with no migration between sites, the numbers in the best site will become overwhelmingly larger than
those at the other sites, and the overall population growth rate will be determined by 
the rate prevailing at the best site. Introducing migration between sites, as Karlin showed \cite{karlin82},
will always reduce the long-run growth rate of the total population.

Karlin's theorem assumes deterministic growth. den Boer \cite{boer1968spreading} argued that migration may increase long-run growth when there is independent or weakly correlated stochastic variation in growth among sites. In a different context, it has long been argued \cite{cole1954pcl} that populations of individuals who delay or spread reproduction over time will suffer reduced growth rate. 
But Cohen \cite{cohen1966} and Cohen and Levin \cite{cohen1991} used analysis and simulations to show that long-run growth of a population could increase as a result of a life cycle delay when there are some kinds of random variation in time, or by migration when there are some kinds of random variation across space. 
These kinds of stochastic variation have been formulated as random matrix models  whose Lyapunov exponent  is the long-run growth rate of the population, as discussed by \cite{tuljapurkar2000ets, Wiener1994}. In this general setting, we would like to know whether the long-run growth rate increases when there is mixing in space and/or time \cite{tuljapurkar2000ets} --- biologically, when should migration and/or delay be favoured to
evolve? A general and precise answer has been difficult because previous work \cite{Wiener1994} shows that the long-run growth rate can be singular (e.g., non-differentiable) in the limit of no mixing. A similar singularity arises in random-matrix models used in models of disordered matter (\cite{derrida1983singular}). 

Here we consider a random-matrix model of migration among sites whose individual growth rates vary stochastically over time, and characterize the behavior of the Lyapunov exponent in the limit of zero migration. 
As we show, this model can  be used to study a number of models of migration, life cycle delay, 
or a combination of these. Our results address evolutionary stability (in a fitness-maximising context) of a small amount of mixing, via migration or life-cycle delays. 
We find that
when the growth rates at different sites are similar, the population with
a small positive migration rate will profit from the variability between sites,
to grow faster than would have been possible even at the best single site.
In particular, when the growth rates are equal --- which is always the case
in the diapause setting --- the sensitivity of stochastic growth rate to changes
migration rate is extreme, varying near 0 like $1/\log \epsilon^{-1}$.

Our results complement a recent analysis of spatially stochastic growth in \cite{ERSS12}. They use diffusions to model migration among sites that have independently varying stochastic growth rates and characterize the migration rate that maximizes the long-run stochastic growth rate. 

\subsection{The mathematical problem: Migration} \label{sec:migration}
Suppose $D_{1},D_{2},\dots$ is an i.i.d.\  sequence of $d\times d$ diagonal matrices.
We write $\xi_{t}^{(0)},\dots,\xi_{t}^{(d-1)}$ for the diagonal elements of
$D_{t}$, and assume $X_{t}^{(i)}:=\log\xi_{t}^{(i)}$ all have finite 
mean $\mu_{i}$ and  finite variance. We order them so that $\mu_{0}$ 
is the largest. We also write $\tmu^{(i)}:=\mu_{0}-\mu_{i}$. 

We let $A_{t}$ be an i.i.d.\ sequence of nonnegative
$d\times d$ matrices with zeros on the diagonal, and we assume that
the pairs $(D_{t},A_{t})$ are all independent. 
We define the {\em migration graph} $\M$ to be a directed graph 
whose vertices are the sites $\{0,\dots,d-1\}$,
possessing an edge $(i,j)$ if $\operatorname{ess}\inf \E[\log A_{t}(j,i)\cond D_{t}]>-\infty$.
That is, we do not assume that $A_{t}$ and $D_{t}$ are independent, but for the migration
steps we consider to be possible we assume that there is a lower bound to how close $A_{t}(j,i)$ can
come to 0 that is independent of all $D_{t}$. It follows trivially that when $(i,j)$ is an
edge $\E[\log A_{t}(j,i)\cond \eu{D}]>-\infty$
where $\eu{D}$ is the $\sigma$-algebra generated by $\{D_{t},\, 1\le t<\infty\}$. 
We assume that $\M$ is connected.

We let  $\Delta$ be
a fixed diagonal matrix with entries $\Delta_{0},\dots,\Delta_{d-1}$. (Generally we will be thinking of $\Delta$ as the growth or survival
penalty for migration or diapause, so that the entries will be negative, but this
is not essential.) We assume the penalty acts multiplicatively on growth ---
this seems reasonable from a modeling perspective, and elegantly avoids the 
problem of negative matrix entries --- and is proportional to $\epsilon$.   We define
$$
D_{t}(\epsilon):= \e^{\epsilon \Delta}D_{t}+\epsilon A_{t}.
$$
We also write $\tD^{(i)}:=\Delta_{0}-\Delta_{i}$.

For $\epsilon>0$ the i.i.d.\  sequence $D_{t}(\epsilon)$
satisfies the conditions for the existence of a stochastic growth rate independent
of starting condition.\cite{jC79} That is, if we define the partial products
$$
R_{T}(\epsilon):= D_{T}(\epsilon)\cdot D_{T-1}(\epsilon)\cdot\cdots\cdot D_{1}(\epsilon)
$$
then
$$
a(\epsilon):=\lim_{T\to\infty} T^{-1}\log R_{T}(\epsilon)_{i,j}.
$$
are well defined deterministic quantities, in the sense that the limit exists
almost surely, is almost-surely constant, and is the same for any 
$0\le i,j\le d-1$.

Of course, $R_{T}(0)$ is not so simple. The off-diagonal terms
are all 0, while on the diagonal, by the Strong Law of Large Numbers,
$$
\lim_{T\to\infty} T^{-1}\log R_{T}(0)_{i,i}=\mu_{i}.
$$
We take $a(0):=\mu_{0}=\lim_{\epsilon\downarrow 0}a(\epsilon)$. (That is,
the growth rate of all matrix entries when $\epsilon>0$ is small will be
dominated by the largest growth rate in a single diagonal entry.
This will be an elementary consequence of the stronger results
that will be stated in section \ref{sec:mainresult}, and proved in the sequel.)

\subsection{Variation of the mathematical problem: Diapause}  \label{sec:diapause}
Consider a population in which individuals progress through immature life stages until reaching adulthood, when they reproduce and
then die. Diapause is a life-cycle delay in which individuals can stay in some immature stage with some probability. We can describe diapause
by reconceptualizing the ``sites'' of the previous section as life stages, and also describe an organism's progress using matrices that are not diagonal, but sub-diagonal. The life stages  (or sites)
are viewed as a cycle, described by matrices of the form$$
M_{t}:= 
\begin{pmatrix}
0&0&\cdots&0&B_{t}\\
S_{t}^{(0)}&0&\cdots&0&0\\
0&S_{t}^{(1)}&\cdots&0&0\\
\vdots&\vdots&\ddots&\vdots&\vdots\\
0&0&\cdots&S_{t}^{(d-2)}&0
\end{pmatrix}
$$
Here ages run from $1$ to $d$, and are equivalently referred to as age classes that run from 0 to $d-1$. The quantity $S_{t}^{(j)}\in (0,1)$ is the proportion surviving from age $j$ to $j+1$ in year $t$, and $B_{t}$ is the average number of offspring produced when an individual becomes mature in age-class $d-1$.
Offspring are born into age-class 0, and the parent --- in age class $d-1$ --- dies.To this we add $\epsilon A$, where now
$A$ is a fixed diagonal matrix with nonnegative entries, and at least one positive entry, and also allow for
penalties $\e^{-\epsilon\Delta_{i}}$.

We immediately have
\begin{equation} \label{E:a0diapause}
a(0)=\frac{1}{d}\Bigl(\E[\log B_{t}]+\sum_{j=0}^{d-2} \E[\log S_{t}^{(j)}]\Bigr).
\end{equation}
If we look at this in groups of $d$ generations, the product
$$
D_{t}:=\bigr(\e^{\epsilon \Delta}M_{dt+d-1}+\epsilon A_{dt+d-1}\bigl)\bigr(\e^{\epsilon \Delta}M_{dt+d-2}+\epsilon A_{dt+d-2}\bigl)\cdots \bigr(\e^{\epsilon \Delta}M_{dt}+\epsilon A_{dt}\bigl)
$$
is diagonal when $\epsilon=0$, and is of the form described in section \ref{sec:migration}.

Consequently, we may apply Theorem \ref{T:samerate1} to this $D_{t}$, producing the same
$1/\log\epsilon^{-1}$ rate of increase, as stated in Corollary \ref{C:diapause}.

\subsection{The effect of the penalty $\Delta$}
We will mostly be concerned with analyzing the case $\Delta\equiv 0$.
For most purposes, $\Delta$ has no effect. 
But this is not always true.

The crucial point is that the effect of $\Delta$ is always nearly linear in $\epsilon$,
while the increase of $a$ near 0 is often superlinear,
growing either as $1/\log\epsilon^{-1}$ or as $\epsilon^{\beta}$ for a power
$\beta<1$. In either case, the rapid increase in $a$ near 0 will be qualitatively
unaffected by a linear term for $\epsilon$ sufficiently small. 
Even when the linear term is negative (as we will generally be assuming it to be),
the growth rate $a$ will still be increasing on a small interval of $\epsilon>0$.

On the other hand, as discussed in section \ref{sec:mainresult} in some cases
we cannot exclude the possibility that the growth rate when $\Delta\equiv 0$
is qualitatively like $\epsilon^{\beta}$ with $\beta\ge 1$. If $\beta>1$ and
$\Delta_{0}<0$ then $a$ will be decreasing near 0; if $\beta=1$ then a
more sensitive analysis would be required.

Since the upper and lower bounds on the appropriate power of $\epsilon$ in 
Theorem \ref{T:diffrate1} are distinct, 
with the lower bound on the growth rate (the upper bound on the power of 
$\epsilon$) being sometimes larger than 1, it will not always be possible to determine
whether the change in growth rate increases or decreases for infinitesimal changes
in $\epsilon$.

\subsection{Orlicz norm and sub-Gaussian variance factor} \label{sec:Orlicz}
Let $\Psi(x)=e^{x^{2}}/5$. Following \cite{dP90} we define the Orlicz norm 
$\|Z\|_{\Psi}$ for a centered random variable $Z$ by
\begin{equation} \label{E:orlicz}
\|Z\|_\Psi:=\inf\{C\,:\, \E[\Psi(|Z|/C)]<1\}.
\end{equation}

In \cite{BLM13} a random variable $Z$ is said to be {\em sub-Gaussian} if it
has finite {\em variance factor} $\tau(Z)$, defined as
\begin{equation} \label{E:sGs}
\tau(Z):=\inf \left\{ a\ge 0\,:\, \E\left[\e^{\lambda Z}\right] \le 
   \e^{a\lambda^{2}/2} \, \forall \lambda \in\R\right\}.
\end{equation}
(The square-root of this 
is called the {\em sub-Gaussian standard} in \cite{BK00}.)
We also introduce the term {\em subvariance} to denote
\begin{equation} \label{E:subvar}
\tau_{*}(Z):= \liminf_{z\to\infty}\frac{z^{2}}{-2\log \P\bigl\{ Z>z\bigr\}}.
\end{equation}
Note that the variance factor may be thought of as an upper 
bound on the scale of the tails, while the subvariance is a lower
bound. When $Z$ is not centered (that is, $\E[Z]\ne 0$) we apply the definition
to $Z-\E[Z]$.

We point out here that the assumption that $\tX_{t}=\log(\xi_{t}^{(1)}/\xi_{t}^{(0)})$ 
have nonzero subvariance implies what may be considered exceptionally heavy tails
for the growth rates --- effectively, something like log-normal. 
This is what is required for a nontrivial lower bound in Theorem \ref{T:diffrate1}.
Thus, it seems plausible to infer that the population will obtain no long-term benefit
from sending occasional individuals to a site with lower average growth, unless
the low average growth is compensated by fat positive tails, meaning that there is a small chance
of a very large payoff. (These heavy tails may also be generated if $\xi_{t}^{(0)}$
puts too much probability near 0 --- that is, a population crash.)

\begin{lemma}  \label{L:sGs}
A sub-Gaussian centered random variable $Z$ satisfies
\begin{align}
 \label{E:orliczsG}
\|Z\|_{\Psi}&\le \sqrt{\frac{5 \tau(Z)}{2}};\\
\tau_{*}(Z)&\le \tau(Z)<\infty. \label{E:comparevar}
\end{align}

If $Z$ is Gaussian with mean 0 and variance $\sigma^{2}$ then 
\begin{equation} \label{E:gausssubgauss}
\tau_{*}(Z)=\tau(Z)=\sigma^{2}.
\end{equation}
\end{lemma}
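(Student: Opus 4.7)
The plan is to treat the three displays in turn, with the Orlicz-norm bound (i) being the only nontrivial piece; (ii) follows from a one-line Chernoff estimate, and (iii) is a direct computation on the Gaussian moment generating function together with the standard Mills ratio.

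For (i) I would first establish the auxiliary squared-exponential bound
\[
\E\bigl[\e^{\lambda Z^{2}}\bigr]\le \frac{1}{\sqrt{1-2\lambda\tau(Z)}}\qquad\text{for }0<\lambda<\frac{1}{2\tau(Z)},
\]
using the Gaussian integration trick: if $Y\sim N(0,1)$ is independent of $Z$, then $\e^{\lambda z^{2}}=\E[\e^{\sqrt{2\lambda}\,zY}]$, so by Fubini and the sub-Gaussian hypothesis applied conditionally on $Y$,
\[
\E\bigl[\e^{\lambda Z^{2}}\bigr]=\E_{Y}\E_{Z}\bigl[\e^{\sqrt{2\lambda}\,YZ}\bigr]\le \E_{Y}\bigl[\e^{\lambda\tau(Z)Y^{2}}\bigr]=(1-2\lambda\tau(Z))^{-1/2}.
\]
Choosing $C^{2}=5\tau(Z)/2$ makes $\lambda=1/C^{2}$ satisfy $\lambda\tau(Z)=2/5<1/2$, and the bound evaluates to $\sqrt{5}$. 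Hence $\E[\Psi(|Z|/C)]=\E[\e^{Z^{2}/C^{2}}]/5\le 1/\sqrt{5}<1$, which by the definition \eqref{E:orlicz} gives $\|Z\|_{\Psi}\le C$, exactly \eqref{E:orliczsG}. This Gaussian-integration step is the main obstacle, or at least the main place where a nonobvious identity is needed; the subsequent check that the numerical constant $5$ in the definition of $\Psi$ is compatible with the factor $5/2$ in \eqref{E:orliczsG} is what pins the constants down.

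For (ii), the definition of $\tau(Z)$ guarantees $\tau(Z)<\infty$. To compare with $\tau_{*}$, I would apply the standard Cramér--Chernoff argument: for any $z>0$ and $\lambda>0$,
\[
\P\{Z>z\}\le \e^{-\lambda z}\E[\e^{\lambda Z}]\le \e^{-\lambda z+\tau(Z)\lambda^{2}/2},
\]
and optimising at $\lambda=z/\tau(Z)$ yields $\P\{Z>z\}\le \e^{-z^{2}/2\tau(Z)}$. Thus $z^{2}/(-2\log\P\{Z>z\})\le \tau(Z)$ for all $z>0$, and taking $\liminf_{z\to\infty}$ gives $\tau_{*}(Z)\le \tau(Z)$, which is \eqref{E:comparevar}.

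For (iii), if $Z\sim N(0,\sigma^{2})$ then $\E[\e^{\lambda Z}]=\e^{\sigma^{2}\lambda^{2}/2}$ shows $\tau(Z)\le \sigma^{2}$, while for any $a<\sigma^{2}$ the ratio $\e^{\sigma^{2}\lambda^{2}/2}/\e^{a\lambda^{2}/2}\to\infty$ as $\lambda\to\infty$, so no such $a$ works and $\tau(Z)=\sigma^{2}$. Combined with (ii) this gives $\tau_{*}(Z)\le\sigma^{2}$. For the matching lower bound I would use the Mills-ratio lower bound $\P\{Z>z\}\ge \frac{z\sigma}{z^{2}+\sigma^{2}}\cdot\frac{1}{\sqrt{2\pi}}\e^{-z^{2}/2\sigma^{2}}$, from which $-\log\P\{Z>z\}=z^{2}/2\sigma^{2}+O(\log z)$ and therefore $z^{2}/(-2\log\P\{Z>z\})\to\sigma^{2}$, proving $\tau_{*}(Z)\ge \sigma^{2}$ and hence \eqref{E:gausssubgauss}.
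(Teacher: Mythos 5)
Your proof is correct, but the key step is argued by a genuinely different route than the paper's. For \eqref{E:orliczsG} the paper first derives the two-sided tail bound $\P\{|Z|>z\}\le \e^{-z^{2}/2\tau}$ (the same Chernoff computation you use for \eqref{E:comparevar}) and then bounds $\E[\e^{Z^{2}/C^{2}}]$ by integrating that tail bound by parts, obtaining the value $5$ at $C=\sqrt{5\tau/2}$; you instead bound the squared-exponential moment directly from the moment generating function via the Gaussian-randomization identity $\e^{\lambda z^{2}}=\E[\e^{\sqrt{2\lambda}\,zY}]$, getting $\E[\e^{\lambda Z^{2}}]\le(1-2\lambda\tau)^{-1/2}$ and hence the value $\sqrt{5}$ at the same $C$. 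Your route buys a strictly sharper exponential-moment constant, which in particular makes $\E[\Psi(|Z|/C)]\le 1/\sqrt{5}<1$ hold outright, whereas the paper's bound of exactly $5$ gives $\E[\Psi(|Z|/C)]\le 1$ and strictly speaking needs the infimum in \eqref{E:orlicz} to absorb the boundary case; the paper's route, on the other hand, reuses the tail bound \eqref{E:taubound} that it needs anyway later (e.g.\ in Lemma \ref{L:Orlicztau}). Two small points: you apply the sub-Gaussian MGF bound with $a=\tau(Z)$ itself, which is legitimate because for each fixed $\lambda$ the bound passes to the limit as $a\downarrow\tau(Z)$ (the paper sidesteps this by working with $\tau+\delta$ and letting $\delta\to0$); and your explicit Mills-ratio verification of \eqref{E:gausssubgauss}, which the paper dismisses as trivial, is accurate and complete. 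Your Chernoff argument for \eqref{E:comparevar} is essentially identical to the paper's.
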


\begin{proof}
The statement \eqref{E:gausssubgauss} is trivial.

If $\tau=\tau(Z)$ is finite then for any $ \lambda,z,\delta>0$,
$$
\P\bigl\{ |Z|>z \bigr\} \le \e^{\frac{(\tau+\delta)\lambda^{2}}{2}- \lambda z } .
$$
Taking $\lambda=z/(\tau+\delta)$, we have $\P\bigl\{ |Z|>z \bigr\} \le \e^{-z^{2}/2(\tau+\delta)}$,
which implies
\begin{equation} \label{E:taubound}
\P\bigl\{ |Z|>z \bigr\} \le \e^{-z^{2}/2\tau},
\end{equation}
since $\delta$ is arbitrary.
This immediately proves \eqref{E:comparevar}.

Integrating by parts, we have for $C>\sqrt{2\tau}$,
\begin{align*}
 \E\left[ \e^{Z^{2}/C^{2}} \right] &=1+ \frac{2}{C^{2}}\izf z \e^{z^{2}/C^{2}} 
    \P\bigl\{ |Z|>z \bigr\}\\
 &\le 1+ \frac{2}{C^{2}}\izf z \e^{z^{2}/C^{2}} \e^{-z^{2}/2\tau}\\
 &= 1+\frac{2\tau^{2}}{C^{2}-2\tau}.
\end{align*}
If $C=\sqrt{5\tau/2}$ then this bound is 5, proving \eqref{E:orliczsG}.
\end{proof}

The Orlicz norm is a norm, in the sense that the Orlicz norm of an 
arbitrary sum of random variables is no greater than the sum of the Orlicz norms.
For independent sub-Gaussian random variables $X_1,\dots,X_k$
the variance factors are also sub-additive.

\begin{lemma}  \label{L:Orlicztau}
For any independent centered sub-Gaussian random variables $X_{1},\dots,X_{k}$,
\begin{equation} \label{E:sGsum}
\tau\Bigl(\sum X_{i}\Bigr) \le \sum \tau(X_{i}) ,
\end{equation}
and
\begin{equation} \label{E:sGsum2}
\P\Bigl\{\Bigl|\sum X_{i}\Bigr|>x\Bigr\} \le \exp\Bigl\{-\left(2\sum \tau(X_{i}) \Bigr)^{-1} x^{2} \right\}.
\end{equation}
Also
\begin{equation} \label{E:Orlicztau}
\| X_1+\cdots+X_k\|_\Psi \le \sqrt{5/2}\left( \sum \tau(X_{i}) \right)^{1/2}.
\end{equation}

If $\max\tau(X_{i})\le \tau$ then
\begin{equation} \label{E:sGsum3}
\P\Bigl\{\Bigl|\sum X_{i}\Bigr|>x\Bigr\} \le \exp\left\{- \frac{x^{2}}{2k\tau} \right\}.
\end{equation}
and
\begin{equation} \label{E:Orlicztau2}
\| X_1+\cdots+X_k\|_\Psi \le \sqrt{\frac{5k}{2} \tau }.
\end{equation}
\end{lemma}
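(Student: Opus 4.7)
The plan is to reduce everything to the basic definition \eqref{E:sGs} of the variance factor together with the tail bound \eqref{E:taubound} and the Orlicz estimate \eqref{E:orliczsG} already proved in Lemma \ref{L:sGs}. The core input is independence: by independence of the $X_i$ the moment generating function factorises, so for any $\lambda\in\R$,
\begin{equation*}
\E\Bigl[\e^{\lambda\sum X_i}\Bigr]=\prod_{i=1}^{k}\E\bigl[\e^{\lambda X_i}\bigr]
\le\prod_{i=1}^{k}\e^{\tau(X_i)\lambda^{2}/2}=\exp\Bigl(\tfrac{\lambda^{2}}{2}\sum_{i=1}^{k}\tau(X_i)\Bigr).
\end{equation*}
Matched against the defining infimum \eqref{E:sGs} this is exactly \eqref{E:sGsum}. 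Note that centeredness of the $X_i$ keeps $\sum X_i$ centered, so the definition applies directly.

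Given \eqref{E:sGsum}, the tail bound \eqref{E:sGsum2} is not a separate argument at all: the sum $\sum X_i$ is itself a centered sub-Gaussian random variable with variance factor at most $\sum\tau(X_i)$, so \eqref{E:taubound} from the already proved Lemma \ref{L:sGs} gives $\P\{|\sum X_i|>x\}\le\exp(-x^{2}/2\sum\tau(X_i))$. Similarly, the Orlicz bound \eqref{E:Orlicztau} is an immediate consequence of \eqref{E:orliczsG} applied to $\sum X_i$, which yields $\|\sum X_i\|_{\Psi}\le\sqrt{5\tau(\sum X_i)/2}\le\sqrt{5\sum\tau(X_i)/2}$.

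Finally, the uniform versions \eqref{E:sGsum3} and \eqref{E:Orlicztau2} are obtained by simply replacing $\sum\tau(X_i)$ by the coarser bound $k\tau$ in the two previous inequalities. So the entire lemma follows mechanically once the mgf product step is recorded.

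The only ``obstacle'' worth flagging is making sure centeredness is preserved and that the infimum in \eqref{E:sGs} is actually attained or at least well controlled when one chains the bounds: one should work with $\tau(X_i)+\delta$ in place of $\tau(X_i)$ and let $\delta\downarrow 0$ at the end, exactly as was done in the proof of \eqref{E:taubound}. This is a cosmetic point rather than a real difficulty, but it is the one spot where a careless one-line argument could be technically wrong.
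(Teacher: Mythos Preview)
Your proof is correct and follows essentially the same route as the paper: the paper simply cites \cite{BK00} for \eqref{E:sGsum} and then invokes \eqref{E:taubound} and Lemma~\ref{L:sGs} for the rest, while you write out the standard mgf-factorisation argument for \eqref{E:sGsum} explicitly. Your caveat about working with $\tau(X_i)+\delta$ is harmless but not actually needed here, since the set $\{a:\E[\e^{\lambda Z}]\le\e^{a\lambda^2/2}\ \forall\lambda\}$ is closed and the infimum in \eqref{E:sGs} is attained.
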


\begin{proof}
Statement \eqref{E:sGsum} is Lemma 1.7 of \cite{BK00}, and \eqref{E:sGsum2}
follows by \eqref{E:taubound}. The remainder follows by Lemma \ref{L:sGs}.
\end{proof}

\subsection{Some basic assumptions}
For the upper bounds on growth rate (see Theorem \ref{T:diffrate1}) we will be assuming 
$$
\tX_{t}^{(j)}:=X_{t}^{(j)} - X_{t}^{(0)},
$$ 
with mean $-\tmu^{(i)}$, are sub-Gaussian, writing $\ttau^{(j)}:=\tau(\tX_{t}^{(j)}+\tmu^{(i)})$. We will
also denote the variance factor of $X_{t}^{(0)}$ itself by $\tau^{(0)}$. We assume all of these are finite.
We also define the corresponding subvariances $\ttau^{(j)*}$ and $\tau^{(0)*}$.
Thus
\begin{equation} \label{E:tailUB}
\log\E\left[\e^{\lambda \tX^{(j)}_{t}}\right] \le \frac{\lambda^{2}\ttau^{(j)}}{2}.
\end{equation}
We write 
\begin{align*}
\mu_{A}&:=\max_{(i\to j)\in \M} \E[\log_{+}  A_{t}(j,i)-X_{t}^{(0)}],\\ 
\ttau_{A}&:=\max_{0\le i,j\le d-1} \tau(\log_{+} A_{t}(i,j)-X_{t}^{(0)}).
\end{align*}
We use $\log_{+} A_{t}(i,j):=0\vee \log A_{t}(i,j)$, because the negative part may
actually be infinite. We will use these expressions only for defining an upper bound on the
growth increase induced in a migration step by the matrix elements $A_{t}(i,j)$,
for which increasing $A_{t}(i,j)$ arbitrarily will only improve the bound.

\subsection{Main results} \label{sec:mainresult}
\subsubsection{Same average growth rate at two distinct sites}
\begin{Thm} \label{T:samerate1}
The modulus of continuity of $a$  at $\epsilon=0$ is bounded by $1/\log \epsilon^{-1}$. That is,
\begin{equation} \label{E:logepsMoC0}
\limsup_{\epsilon\downarrow 0} (\log \epsilon^{-1}) \Bigl( a(\epsilon)-a(0) \Bigr)
<\infty.
\end{equation}
If there exists a site $j$ such that $\min\tmu^{(j)}=0$ and $\tX^{(j)}_{t}$ is
not almost surely zero, then
$a$ has modulus of continuity $1/\log \epsilon^{-1}$ at $\epsilon=0$. That is,
\begin{equation} \label{E:logepsMoC}
0<\liminf_{\epsilon\downarrow 0} (\log \epsilon^{-1}) \Bigl( a(\epsilon)-a(0) \Bigr)
\leq\limsup_{\epsilon\downarrow 0} (\log \epsilon^{-1}) \Bigl( a(\epsilon)-a(0) \Bigr)
<\infty.
\end{equation}
\end{Thm}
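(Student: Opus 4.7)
My plan is to prove the two bounds separately via complementary sub- and super-additivity estimates paired with explicit fixed-$T$ calculations.

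\emph{Lower bound in \eqref{E:logepsMoC}.} Because all entries of $D_{t}(\epsilon)$ are nonnegative, $(AB)_{00}\ge A_{00}B_{00}$, so $\log R_{T}(\epsilon)_{00}$ is super-additive in $T$ and the ergodic theorem gives $a(\epsilon)\ge T^{-1}\E[\log R_{T}(\epsilon)_{00}]$ for every $T$. I would partition $\{1,\dots,T\}$ into $K$ contiguous blocks of length $L=L(\epsilon)$ and, within each block, sum over exactly two intra-block paths $0\to 0$: (a)~remain at site $0$ throughout; (b)~migrate to the tied site $j$ at the first step of the block, stay at $j$ for $L-2$ steps, and migrate back at the last step. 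Because these block choices act on disjoint factors of the matrix product, summing over all $2^{K}$ combinations yields
\[
R_{T}(\epsilon)_{00}\ge e^{\sum_{t=1}^{T}X_{t}^{(0)}}\prod_{k=1}^{K}\bigl(1+\epsilon^{2}B_{k}\,e^{W_{k}-U_{k}}\bigr),
\]
where $W_{k}$ is the partial sum of $\tX_{t}^{(j)}$ over the interior of block $k$, $B_{k}$ is the product $A_{\cdot}(j,0)A_{\cdot}(0,j)$ of the two migration weights in block $k$, and $U_{k}$ collects the two $X_{t}^{(0)}$ values at the migration instants. By the tie $\tmu^{(j)}=0$, $W_{k}$ has mean zero; by nondegeneracy of $\tX^{(j)}$, its variance is $(L-2)\sigma^{2}$ for some $\sigma>0$. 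Choosing $L=\lceil 16(\log\epsilon^{-1})^{2}/\sigma^{2}\rceil$ so that $\sigma\sqrt{L-2}\sim 4\log\epsilon^{-1}$, the CLT yields $\P(W_{k}\ge 4\log\epsilon^{-1})\to\P(Z\ge 1)>0$ as $\epsilon\downarrow 0$. Combined with the tightness of $U_{k}$ and the connectedness of $\M$ (which gives an event of positive probability on which $\log B_{k}$ is bounded below), this forces $\E[\log(1+\epsilon^{2}B_{k}e^{W_{k}-U_{k}})]\ge c\log\epsilon^{-1}$, and summing over $K=T/L$ blocks gives $a(\epsilon)\ge\mu_{0}+c'/\log\epsilon^{-1}$.

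\emph{Upper bound in \eqref{E:logepsMoC0}.} Submultiplicativity gives $a(\epsilon)\le T^{-1}\E[\log\|R_{T}(\epsilon)\|]$ for every $T$; the naive single-step version is too lossy, producing only $a(\epsilon)\le\E[\log\|D_{t}(\epsilon)\|]=\mu_{0}+\Omega(1)$. I would take $T\asymp(\log\epsilon^{-1})^{2}$ and expand $R_{T}(\epsilon)_{ij}$ as a sum over paths in $\{0,\dots,d-1\}$ indexed by the number $K$ of migrations, with each path contributing $\epsilon^{K}\prod_{t}\xi_{t}^{(i_{t})}\prod_{\text{mig}}A_{t}(\cdot,\cdot)$. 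For such a path to beat the ``all at $0$'' contribution $e^{S_{T}^{(0)}}$ by a factor exceeding $e^{T/\log\epsilon^{-1}}$, its $\tX^{(i_{t})}_{t}$-fluctuations must overcome the $K\log\epsilon^{-1}$ migration penalty plus the $T/\log\epsilon^{-1}$ target, and the sub-Gaussian tail bounds of Lemma~\ref{L:Orlicztau} show this has probability controlled by a Gaussian exponent. A Laplace-type sum over $K$ and path shapes at $T\asymp(\log\epsilon^{-1})^{2}$ then gives $\E[\log\|R_{T}(\epsilon)\|]\le T\mu_{0}+O(\log\epsilon^{-1})$, whence $a(\epsilon)\le\mu_{0}+O(1/\log\epsilon^{-1})$.

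The main obstacle is the upper bound: crude matrix-norm bounds (single-step submultiplicativity, or the Jensen inequality $a(\epsilon)\le\log\lambda(\E[D(\epsilon)])$) give only an $O(1)$ bound on $a(\epsilon)-a(0)$, so the path-level Laplace argument is essential. The lower-bound argument is conceptually simpler once the block construction is in place; its one delicate step is the CLT estimate $\E[\log(1+\epsilon^{2}e^{W_{k}})]\ge c\log\epsilon^{-1}$, which requires the nondegeneracy of $\tX^{(j)}$ and the scaling $L\asymp(\log\epsilon^{-1})^{2}$ that aligns the standard deviation of $W_{k}$ with the threshold $2\log\epsilon^{-1}$ created by the $\epsilon^{2}$ migration penalty.
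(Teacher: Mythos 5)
Your lower bound is essentially sound and is a more elementary version of the paper's argument (Section \ref{sec:sameLB}): blocks of length $\asymp(\log\epsilon^{-1})^{2}$, one excursion per block, and a positive-probability large fluctuation of the centered walk $\sum\tX^{(j)}_{t}$; the paper instead optimises the excursion's endpoints inside each block and invokes Donsker plus the Brownian ``drawdown'' expectation, but a plain CLT at a single scale suffices for positivity. One correction is needed: you migrate $0\to j\to 0$ in single steps, i.e.\ you assume edges $(0,j)$ and $(j,0)$ exist, so that $B_{k}=A_{\cdot}(j,0)A_{\cdot}(0,j)$ is meaningful. The migration graph is only assumed connected, so in general the excursion must follow a shortest cycle through $j$ of length $\kappa_{j}\ge 2$, the penalty is $\epsilon^{\kappa_{j}}$, the threshold becomes $2\kappa_{j}\log\epsilon^{-1}$, and the lower bound on the migration weights must be taken as essential infima of conditional expectations given the $D$'s (as in the paper), since $A_{t}$ and $D_{t}$ need not be independent. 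These are routine repairs.

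The upper bound, however, has a genuine gap. Your plan is to fix $T\asymp(\log\epsilon^{-1})^{2}$, expand $R_{T}(\epsilon)$ over paths, and control the paths with $K$ migrations by single-path sub-Gaussian tails (Lemma \ref{L:Orlicztau}) summed over the $\asymp\binom{T}{K}d^{K}$ path shapes. This union/Laplace bound gives $\E\bigl[\max_{f\in\F_{T,K}}Z_{f}\bigr]\lesssim\sqrt{\tau T\,K\log(dT\e/K)}$, whereas the true order for this highly correlated family is $\sqrt{TK}$ (for $K=2$ it is the maximal increment of a mean-zero random walk over $T$ steps, of order $\sigma\sqrt{T}$, not $\sigma\sqrt{T\log T}$). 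Since the zero-migration fluctuation term already forces $T\gtrsim(\log\epsilon^{-1})^{2}$ in any bound of the form $a(\epsilon)\le T^{-1}\E\log\|R_{T}(\epsilon)\|$, the spurious $\log T\asymp\log\log\epsilon^{-1}$ in the entropy is unavoidable in your scheme: optimising over $T$ and $K$ one only gets $a(\epsilon)-a(0)=O\bigl((\log\log\epsilon^{-1})/\log\epsilon^{-1}\bigr)$, not the claimed $O(1/\log\epsilon^{-1})$, so \eqref{E:logepsMoC0} does not follow. Relatedly, your probabilistic phrasing fails at this scale: with $T\asymp(\log\epsilon^{-1})^{2}$ the event that some two-migration path beats the all-zero path by $e^{T/\log\epsilon^{-1}}$ has probability bounded away from $0$ (the walk's maximal increment is comparable to the threshold), so per-path Gaussian tails cannot make that probability small; what must be controlled is the expected maximum, at the correct $\sqrt{TK}$ order. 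This is exactly what the paper's chaining argument supplies (the Orlicz-norm chaining inequality together with the packing bound of Lemma \ref{L:packingbound}, and the grouping $m_{j}\asymp Tj/\log^{2}\epsilon$ in Section \ref{sec:equalupper}); some substitute exploiting the correlation structure of the trajectories is indispensable, and without it your upper bound does not close.
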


Notice that this is a fairly generic result, as the lower bound does not depend
on any assumptions about the tails. The upper bound does depend on the sub-Gaussian
assumption for the logarithms of the matrix entries, meaning that heavy-tailed
distributions --- including, but not exclusively, those that are sub-exponential \cite{jT75},
so {\em a fortiori} entries with polynomial order tail behaviour --- could have an even
slower convergence to 0 as $\epsilon$ approaches 0.

\subsubsection{Rare diapause}
\begin{Cor} \label{C:diapause}
In the diapause setting with $M_{t}$ being not deterministic --- that is, at least one entry
has nonzero variance ---
$a$ has modulus of continuity $1/\log \epsilon^{-1}$ at $\epsilon=0$. That is,
\begin{equation} \label{E:logepsMoCcor}
0<\liminf_{\epsilon\downarrow 0} (\log \epsilon^{-1}) \Bigl( a(\epsilon)-a(0) \Bigr)
\leq\limsup_{\epsilon\downarrow 0} (\log \epsilon^{-1}) \Bigl( a(\epsilon)-a(0) \Bigr)
<\infty.
\end{equation}
\end{Cor}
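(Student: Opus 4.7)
The plan is to reduce the corollary directly to Theorem \ref{T:samerate1} using the block aggregation already set up at the end of section \ref{sec:diapause}. Work with
$$
D_{t}:=\bigl(\e^{\epsilon \Delta}M_{dt+d-1}+\epsilon A_{dt+d-1}\bigr)\cdots\bigl(\e^{\epsilon \Delta}M_{dt}+\epsilon A_{dt}\bigr),
$$
and observe that $D_{T}\cdots D_{1}$ is exactly the $Td$-step product of the original diapause chain, so the long-run growth rate of the blocked process equals $d\,a(\epsilon)$. Because the constant factor $d$ does not affect the $1/\log\epsilon^{-1}$ asymptotic, it suffices to prove the corresponding modulus of continuity for the blocked process.

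The first step is to verify that the blocked $D_{t}$ fits the form required in section \ref{sec:migration}, namely that one can write $D_{t}=\e^{\epsilon\Delta'}D_{t}^{(0)}+\epsilon A_{t}'$ with $D_{t}^{(0)}$ diagonal and $A_{t}'$ nonnegative with zero diagonal. At $\epsilon=0$ the product $\prod_{s=0}^{d-1}M_{dt+s}$ is diagonal because it represents a full cycle of $d$ sub-diagonal shifts, with $(i,i)$ entry given by one full product of the factors $B_{\cdot}$ and $S_{\cdot}^{(j)}$ evaluated at cyclically shifted time indices. Terms in the expansion containing at least one factor of $\epsilon A$ either contribute diagonal corrections (absorbed into $\e^{\epsilon\Delta'}$ or treated as higher-order perturbations of the diagonal part) or contribute off-diagonal entries of order $\epsilon$ to $A_{t}'$.

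With this done, the quantitative hypotheses of Theorem \ref{T:samerate1} are automatic. Cyclic symmetry together with \eqref{E:a0diapause} gives $\E[\log(D_{t}^{(0)})_{i,i}]=d\,a(0)$ for every $i$, so in the notation of section \ref{sec:migration} applied to the blocked process, $\tmu^{(j)}=0$ for every $j$. Non-determinism of $M_{t}$ means some $\log S_{t}^{(k)}$ or $\log B_{t}$ has strictly positive variance; since $\log(D_{t}^{(0)})_{j,j}-\log(D_{t}^{(0)})_{0,0}$ is a signed sum of independent copies of that random variable at different time indices (the same factors appear in both but coupled to different $M$-time slots), it cannot vanish almost surely.

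The only remaining item requiring real care is connectivity of the migration graph $\M$. Because $A$ has at least one positive diagonal entry, say at $(k,k)$, replacing $M_{dt+s}$ by $\epsilon A_{dt+s}$ at exactly one time step within a block freezes an individual at age $k$ for one extra step, shifting its final age class by one relative to the pure cycle; this produces a positive off-diagonal entry of order $\epsilon$ in $A_{t}'$ between cyclically adjacent sites, so $\M$ contains a directed $d$-cycle and is strongly connected. I expect the main obstacle to be precisely this combinatorial bookkeeping: checking that the expansion of $D_{t}$ really delivers the section \ref{sec:migration} form with a uniform lower bound on $\E[\log A_{t}'(j,i)\cond\eu{D}]$ needed to register $(i,j)$ as an edge of $\M$. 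Once this is in hand, Theorem \ref{T:samerate1} (under the natural sub-Gaussian conditions on $\log B_{t}$ and $\log S_{t}^{(j)}$, which transfer to the logs of the entries of $D_{t}^{(0)}$ via Lemma \ref{L:Orlicztau}) yields the conclusion for the blocked growth rate, and dividing by $d$ gives \eqref{E:logepsMoCcor}.
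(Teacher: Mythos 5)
Your argument is exactly the paper's: the paper proves the corollary by blocking the diapause chain into products of $d$ consecutive matrices, observing that the blocked process is diagonal at $\epsilon=0$ with equal mean log growth at every site (by \eqref{E:a0diapause} and cyclic symmetry), and then simply invoking Theorem \ref{T:samerate1}; your proposal fills in precisely these steps, including connectivity of $\M$ via the single-pause terms of order $\epsilon$ and the division by $d$ to return from the blocked growth rate to $a(\epsilon)$. The finer points you flag or assert in passing --- the higher-order $\epsilon$ terms in the blocked matrix, the conditional-expectation condition defining edges of $\M$, and the claim that non-determinism of $M_t$ forces the blocked $\tX^{(j)}$ to be non-degenerate (which implicitly ignores possible perfect correlations among entries of $M_t$) --- are treated with the same brevity in the paper itself, so nothing is missing relative to the paper's own proof.
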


\subsubsection{Distinct growth rates}
For the rest of this section we assume that only site 0 has the maximum
growth rate. Then $a$ still increases for small values of $\epsilon>0$ when
$\Delta_{0}= 0$, but the growth
is only like a power of $\epsilon$. Whether this remains true for $\Delta_{0}<0$
will depend on the exact power,
since the change on the diagonal may produce a decrement linear in $\epsilon$,
so that $a(\epsilon)$ will only be increasing in $\epsilon$ near 0 if the growth
in $a(\epsilon)$ is at a rate faster than linear --- that is, if the power is smaller than 1.

We do not know exactly which power it is, but Theorem \ref{T:diffrate1} gives
upper and lower bounds. The bounds are monotonically increasing in $\mu^{(j)}$
(that is, the larger the gap between the best-site growth rate and the next best,
the smaller the benefit of migration); and decreasing in the variability of the
difference between entries $\tX_{t}^{(j)}$.
However, for this purpose the ``best site'' is determined by a combination of mean difference
and variance, and the distance from the best site, in a way that will be defined precisely
below. 

For the upper bound on growth (that is, the lower bound on the power of $\epsilon$) we
will need the sub-Gaussian variance factor (defined in section \ref{sec:Orlicz}), which
provides an upper bound on the tails, hence on the variability of the differences.
For the lower bound we need the sub-variance $\ttau_{*}^{(j)}$ of $\tX^{(j)}_t$,
defined in \eqref{E:subvar}. We do not need to assume that the sub-variance is positive,
but when all sub-variances are 0 the power of $\epsilon$ in the lower bound is
$\infty$, hence the lower bound is trivial. When $\ttau_{*}^{(j)}$ is positive,
this implies that the Fenchel-Legendre transform
$$
I^{(j)}(x):=\sup_{\lambda>0} \lambda x-\log\E\left[\e^{\lambda \tX_{t}^{(j)}}\right]
$$
is bounded by
\begin{equation} \label{E:FLbound}
I^{(j)}(x)\le \frac{ x^{2}}{2\ttau_{*}^{(j)}}.
\end{equation}
In the special case where $\tX_{t}^{(j)}$ are normally distributed, the
subvariance and the lower standard will both be equal to the
standard deviation. In other cases the ``best'' sites determining the two bounds may be distinct.

We write $\bX$ for the complete collection of all $X^{(j)}_{t}$ for $j=0,1,\dots,d-1$,
$1\le t<\infty$.

If $d=2$ we have an upper bound that $a(\epsilon)-a(0)$ is smaller than
$\epsilon^{4\tmu^{(1)}/(2\tmu^{(1)}+\ttau^{(1)})}$, and lower bound $\epsilon^{4\tmu^{(1)}/\ttau_{*}^{(1)}}$.
For $d>2$ this becomes slightly more complicated for two reasons: First, the
growth will be dominated by one dimension that has the fastest growth; second,
the increment to growth will be smaller if direct transition between the best two
sites is impossible. For this purpose, for each $1\le j\le d-1$ we define $\kappa_{j}$
to be the smallest length of a cycle in $\M$ that starts and ends at 0, and passes through $j$.
(Thus $\kappa_{j}\ge 2$, and is equal to 2 when $A(0,j)>0$ 
and $A(j,0)>0$ both with positive probability.) Define also
\begin{equation} \label{E:rho}
\rho^{(j)}:= \frac{\tmu^{(j)}}{\ttau^{(j)}},\qquad \rho^{(j)}_{*}:= \frac{\tmu^{(j)}}{\ttau_{*}^{(j)}}
\end{equation}

\begin{Thm} \label{T:diffrate1}
Assume all $\tmu^{(j)}>0$, and conditions \eqref{E:tailUB} and \eqref{E:FLbound} hold.
Let $j$ be the site that minimises $\kappa_{j}\rho_{*}^{(j)}$, and
$j'$ the site that minimises $\kappa_{j'}\rho^{(j')}/(1+2\rho^{(j')})$.
If $\Delta_{0}=0$ then for any $\constcp>2\kappa_{j'}$ there are positive constants $C,C'$ (depending
on the $\kappa, \rho,\rho_{*},d, \mu_{A},\ttau_{A}$) such that
for all $\epsilon>0$ sufficiently small,
\begin{equation} \label{E:polyepsMoCd3}
C \epsilon^{2\kappa_{j}\rho^{(j)}_{*}} \le
a(\epsilon)-a(0)\le
C' \epsilon^{2\kappa_{j'}\rho^{(j')}/(1+2\rho^{(j')})} (\log \epsilon^{-1})^{\constcp}.
\end{equation}

Suppose now $\Delta_{0}< 0$. Then
\begin{itemize}
\item If $2\kappa_{j}\rho^{(j)}_{*}<1$ then both
bounds in \eqref{E:polyepsMoCd3} still hold;
\item If 
$
\frac{2\kappa_{j'}\rho^{(j')}}{1+2\rho^{(j')}}<1
   \le  2\kappa_{j}\rho^{(j)}_{*}
$ 
then the upper bound in \eqref{E:polyepsMoCd3} holds;
\item If $\frac{2\kappa_{j'}\rho^{(j')}}{1+2\rho^{(j')}}>1$ then
$a$ is differentiable at $0$, with $a'(0)=-\Delta_{0}$.
\end{itemize}
\end{Thm}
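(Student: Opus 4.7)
My plan is to expand $R_T(\epsilon)_{0,0}$ into a sum over paths in the migration graph $\M$. Distributing the $T$-fold product of $\e^{\epsilon\Delta}D_t+\epsilon A_t$ produces one term for each sequence of sites $i_0,\ldots,i_T$, with ``stay'' factors $\e^{\epsilon\Delta_{i_t}}\xi_t^{(i_t)}$ when $i_{t-1}=i_t$ and ``migration'' factors $\epsilon A_t(i_t,i_{t-1})$ otherwise. Since only site $0$ attains $\mu_0$, the never-migrate path has log-growth rate $\mu_0+\epsilon\Delta_0$, and $a(\epsilon)-(\mu_0+\epsilon\Delta_0)$ is generated entirely by paths containing at least one excursion: a cycle in $\M$ leaving $0$, visiting some $j\neq 0$, and returning. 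A minimal such cycle has length $\kappa_j$ and costs $\epsilon^{\kappa_j}$, and during its $k$-step sojourn at $j$ it accrues a random factor $\prod_{t=1}^k \e^{\tX_t^{(j)}}$ relative to staying at $0$; the log of this factor has mean $-k\tmu^{(j)}$ and sub-Gaussian tails bounded above by $\ttau^{(j)}$ and below by $\ttau_*^{(j)}$. The whole proof is an optimisation of this cost--gain trade-off, controlled by the subvariance in one direction and the variance factor in the other.

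For the lower bound in the $\Delta_0=0$ case, I fix the site $j$ minimising $\kappa_j\rho_*^{(j)}$ and the event $E_k:=\{\sum_{t=1}^k \tX_t^{(j)}\ge\kappa_j\log\epsilon^{-1}\}$, on which inserting a minimal excursion of length $k$ yields a strictly positive log-excess over the never-migrate baseline (the migration cost $\kappa_j\log\epsilon^{-1}$ is exactly covered). The subvariance bound \eqref{E:FLbound} together with a Cram\'er-type lower large-deviation bound gives
\begin{equation*}
\P(E_k)\ge\exp\bigl(-Q(k)(1+o(1))\bigr),\qquad Q(k):=\frac{(k\tmu^{(j)}+\kappa_j\log\epsilon^{-1})^2}{2k\ttau_*^{(j)}},
\end{equation*}
and the optimum of $Q$ is attained at $k\approx\kappa_j(\log\epsilon^{-1})/\tmu^{(j)}$ and equals $2\kappa_j\rho_*^{(j)}\log\epsilon^{-1}$, so $\P(E_k)\ge\epsilon^{2\kappa_j\rho_*^{(j)}+o(1)}$. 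Splitting $[1,T]$ into disjoint windows of length $k+\kappa_j$ and lower-bounding $R_T(\epsilon)_{0,0}$ by the contribution of paths that take an excursion in every window where $E_k$ holds, Kingman's subadditive ergodic theorem (with a uniform positive constant excess per successful window) converts this into $a(\epsilon)-a(0)\ge C\epsilon^{2\kappa_j\rho_*^{(j)}}$.

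For the upper bound I use a fractional-moment argument: for $\alpha\in(0,1]$, $(\sum x_i)^\alpha\le\sum x_i^\alpha$, so $\E R_T(\epsilon)_{0,0}^{\alpha}\le\sum_{\text{paths}}\E(\text{path})^\alpha$, and $a(\epsilon)\le\alpha^{-1}\limsup_T T^{-1}\log\E R_T(\epsilon)_{0,0}^\alpha$ (the almost-sure conversion follows from standard concentration for products of i.i.d.\ matrices). Each path factorises across its sojourns, and the sub-Gaussian MGF bound \eqref{E:tailUB} gives $\E\e^{\alpha\sum_{t=1}^k\tX_t^{(j')}}\le\exp\bigl(k(\alpha^2\ttau^{(j')}/2-\alpha\tmu^{(j')})\bigr)$, which stays bounded in $k$ precisely when $\alpha\le 2\rho^{(j')}$. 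Summing over path structures (number, position, and lengths of excursions) while keeping $\alpha$ strictly below the critical value $2\rho^{(j')}$ produces an upper bound of the form
\begin{equation*}
T^{-1}\log\E R_T(\epsilon)_{0,0}^\alpha\le\alpha(\mu_0+\epsilon\Delta_0)+C_\alpha\,\epsilon^{\alpha\kappa_{j'}}+\cdots,
\end{equation*}
where the coefficient $C_\alpha$ blows up as $\alpha\uparrow 2\rho^{(j')}$ like $(2\rho^{(j')}-\alpha)^{-1}$ (from the geometric series over excursion lengths). Dividing by $\alpha$ and optimising jointly over $\alpha\uparrow 2\rho^{(j')}$ and the window-length cutoff (with a dyadic union bound over $k\in[1,\epsilon^{-O(1)}]$ producing the $(\log\epsilon^{-1})^{c'}$ factor) yields the stated exponent $2\kappa_{j'}\rho^{(j')}/(1+2\rho^{(j')})$; the ``$1+2\rho^{(j')}$'' in the denominator emerges from the balance between the migration-cost exponent $\alpha\kappa_{j'}$ and the divergence rate of $C_\alpha$, a qualitatively different (and weaker) exponent than the lower bound's $2\kappa_j\rho_*^{(j)}$ because the upper bound must control all excursion patterns simultaneously.

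The case analysis for $\Delta_0<0$ then compares the migration contribution $a(\epsilon)-(\mu_0+\epsilon\Delta_0)$ (which still satisfies the two bounds above) with the linear penalty $\epsilon|\Delta_0|$: when $2\kappa_j\rho_*^{(j)}<1$ the lower bound is superlinear and dominates, so \eqref{E:polyepsMoCd3} persists; when the upper-bound exponent exceeds $1$ the migration contribution is $o(\epsilon)$, so the penalty determines the right-derivative at $0$; the intermediate case admits only the upper bound. I expect the main obstacle to be the exponent in the upper bound: controlling the blow-up of $C_\alpha$ as $\alpha\uparrow 2\rho^{(j')}$ to extract exactly the $(1+2\rho^{(j')})$ denominator requires a careful truncation argument at window sizes $k\sim\log\epsilon^{-1}$, and the dyadic union bound must be executed carefully to keep the correction polylogarithmic; the passage from expected fractional moments to almost-sure growth rate via matrix-product concentration is standard but technical, and the combinatorial summation over path structures with multiple excursions threaded through a graph of diameter up to $d$ is the most delicate bookkeeping.
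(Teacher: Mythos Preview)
Your lower-bound argument is essentially the paper's: fix the optimal site, take excursions of length $k\approx\kappa_j(\log\epsilon^{-1})/\tmu^{(j)}$, use Cram\'er's theorem with the subvariance bound to get $\P(E_k)\gtrsim\epsilon^{2\kappa_j\rho_*^{(j)}}$, and use the law of large numbers over disjoint blocks. The paper does precisely this (Section~\ref{sec:distinctLB}), with the SLLN in place of your Kingman's theorem.

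Your upper-bound route is genuinely different. The paper works almost surely: it bounds the tail probability $\P\{\max_{\he}\he[\bX,\bA]\ge zT\}$ by a sub-Gaussian estimate, adds the combinatorial count $\log\#\hE_{T;k,n,m}$, optimises the resulting expression (Claim~\ref{C:smallP}) over the excursion-rate/length/diameter parameters via a calculus lemma (Lemma~\ref{L:bounds}), and concludes by Borel--Cantelli. The exponent $2\kappa\rho/(1+2\rho)$ emerges from that optimisation, not from anything resembling a fractional-moment balance.

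There is a real gap in your upper-bound sketch. The optimisation you describe does not produce the exponent you assert. With $C_\alpha\sim(2\rho-\alpha)^{-1}$, minimising $\alpha^{-1}C_\alpha\,\epsilon^{\alpha\kappa}$ over $\alpha\in(0,\min(1,2\rho))$ gives optimal $\alpha\approx 2\rho-(\kappa\log\epsilon^{-1})^{-1}$ and hence exponent $\min(2\kappa\rho,\kappa)$ with a single log factor --- which is \emph{strictly larger} than $2\kappa\rho/(1+2\rho)$, so in the two-site case your method would actually prove a sharper bound than the theorem claims, and the ``$1+2\rho$'' denominator never appears. More seriously, in the multi-site case your single $\alpha$ must satisfy $\alpha<2\rho^{(j)}$ for \emph{every} site $j$ (else the geometric series over sojourn lengths at that site diverges), so $\alpha<2\min_j\rho^{(j)}$. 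If the site with smallest $\kappa_j$ has large $\rho^{(j)}$ while some other site has small $\rho^{(j)}$, this constraint can force the fractional-moment exponent \emph{below} the paper's $\min_j 2\kappa_j\rho^{(j)}/(1+2\rho^{(j)})$. Your mention of a ``window-length cutoff'' and ``dyadic union bound'' might be intended to address this, but as written it is too vague to assess; you would need a stratification by the diameter of excursions (as the paper does at the end of Section~\ref{sec:distinctUB}) before applying fractional moments with a different $\alpha$ in each stratum. Finally, the passage from $\E R_T^\alpha$ to the almost-sure growth rate needs no ``concentration for matrix products'': Jensen gives $\E\log R_T\le\alpha^{-1}\log\E R_T^\alpha$ directly, and $a(\epsilon)=\lim T^{-1}\E\log R_T$ follows from $L^1$ convergence under your sub-Gaussian hypotheses.
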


Note that $\rho_{*}^{(j)}=\infty$ when the distribution of $\xi_{t}^{(j)}$
is not heavy-tailed --- for example, very natural choices such as gamma-distributed diagonal elements --- 
making the lower bound on the left-hand side vacuous,
but it remains an open question whether zero subvariance implies that
the approach of $a(\epsilon)$ to 0 is faster than polynomial in $\epsilon$.

\section{Excursion decompositions}   \label{sec:Excursions}
Since we are assuming the unique maximum average growth rate is at site 0,
the maximum growth for the perturbed process will arise from rare excursions away from 0;
in particular, from those that include the (not necessarily unique) site that minimises $\rho\kappa$ in \eqref{E:rho}.

\newcommand{\mE}{\mathcal{E}}
\nc{\hE}{\hat{\mE}}
\nc{\ee}{\mathbf{e}}
\nc{\he}{\hat{\ee}}
\nc{\Kappa}{K}
Define $\mE$ to be the set --- called the {\em excursions from 0} --- 
of cycles in the migration graph that start and end at 0, with no intervening returns to 0.
For an excursion $\ee$ we write $|\ee|$ for the length of the cycle minus 2 --- that is,
the number of time steps spent away from 0.

For a given excursion $\ee$ we define
\begin{align*}
\Kappa(\ee)&:= \bigl\{ 0\le t\le |\ee|+1 \, : \, \ee_{t}\ne \ee_{t+1} \bigr\} \\
\kappa(\ee)&:=\max\bigl\{ \kappa_{j}\, :\, j\in \ee\bigr\};\\
\rho(\ee)&:=\min\left\{\rho^{(j)}\, :\, j\in \ee\right\}.
\end{align*}
Note that 0 and $T$ are always in $\Kappa(\ee)$, and 
the definition of $\kappa^{(j)}$ implies that $\kappa(\ee)\ge \#\Kappa(\ee)$.
We will refer to $\kappa(\ee)$ as the {\em diameter} of $\ee$.

We write $\hE_{T}$ for the collection of sequences of excursions that can be fit into
time $\{1,\dots,T\}$. That is, an element $\he\in\hE_{T}$ has an {\em excursion count} $k(\he)$, such that
each $i\in \{1,\dots,k(\he)\}$ there is a pair $(t_{i},\he_{i})$ with $t_{i}\in \{2,\dots,T-1\}$ and 
$\he_{i}\in \mE$ satisfying
\begin{align*}
t_{i}+|\he_{i}|&< t_{i+1},\\
t_{k(\he)}+|\he_{k(\he)}|&\le T.
\end{align*}
We write the total length of an excursion sequence as 
$$
\|\he\|:=\sum_{i=1}^{k(\he)} |\he_{i}|.
$$
We also write $\hE_{T;k,n,m}$ for the subset of $\hE_{T}$ comprising excursion sequences
whose excursion count is $k$, whose total length is $n$, and the sum of whose diameters
$\kappa(\he_{i})$ is $m$. The {\em null excursion sequence} is the element of $\hE_{T}$ with
$k(\he)=\|\he\|=0$.

The $(0,0)$ entry of the product $R_{T}$ will be a sum of terms that are enumerated by elements of
$\hE_{T}$, corresponding to paths through the sites. We define new random variables
as a function of the realizations of $\bX$ and of $\bA$ (the collection of all matrices $A$)
\begin{equation} \label{E:alphas}
\alpha_{t}(i,j):=
\begin{cases}
 \log\epsilon+\log A_{t}(j,i)-\log X_{t}^{(0)}-\epsilon \Delta_{0}&\text{ if } i\ne j,\\
0&\text{ if } i=j.
\end{cases}
\end{equation}
Given an excursion $\ee$ and a starting time $t_{0}\in \{2,\dots,T-|\ee|\}$ we define the random variables
\begin{equation} \label{E:eebracket}
\ee[t_{0};\bX,\bA]:=\sum_{t\in \Kappa(\ee)} \alpha_{t+t_{0}}(\ee_{t},\ee_{t+1})
   +\sum_{t\in \{1,\dots,|\ee|\}\setminus\Kappa(\ee)} \left(\tX_{t+t_{0}}^{(\ee_{t})}-\epsilon \tD_{\ee_{t}}\right)\, .
\end{equation}
Of course, this sum may be $-\infty$, if it includes a transition at which the corresponding entry of $A$ is 0. But the assumptions imply that it
is finite with nonzero probability if $\ee\in\mE$. Given an excursion sequence $\he=\bigl( (t_{i},\he_{i}) \bigr)_{i=1}^{k}\in \hE_{T}$, we define
\begin{equation} \label{E:hesum}
\he[\bX,\bA]:=\sum_{i=1}^{k} \he_{i}[t_{i};\bX,\bA].
\end{equation}

The quantity we are trying to approximate is
\begin{equation} \label{E:aepsilon0}
a(\epsilon)-a(0)=\lim_{T\to\infty} T^{-1}\left(\log R_{T}(\epsilon)_{0,0}-\sum_{i=1}^{T}X_{i}^{(0)}
\right).
\end{equation}

\begin{lemma}\label{L:allterms}
\begin{equation} \label{E:allterms}
\log R_{T}(0,0)=\sum_{t=1}^{T}X_{t}^{(0)}+\epsilon T \Delta_{0}
  +\log \Bigl(1+  \sum_{\he\in \hE_{T}\setminus\{\he^{0}\}} \e^{ \he[\bX,\bA] }\Bigr\},
\end{equation}
where $\he^{0}$ is the null excursion sequence.
\end{lemma}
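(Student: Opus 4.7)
The plan is a direct bookkeeping expansion of the $(0,0)$-entry of the matrix product into a sum over paths, grouped according to their excursions from site $0$. Because $A_t$ has zero diagonal, one has $D_t(\epsilon)_{j,j}=\e^{\epsilon\Delta_j}\xi_t^{(j)}$ and $D_t(\epsilon)_{j,i}=\epsilon A_t(j,i)$ for $i\ne j$, so expanding the matrix product entrywise yields
\begin{equation*}
R_T(\epsilon)_{0,0}=\sum_{\substack{i_0,\ldots,i_T\\ i_0=i_T=0}}\prod_{t=1}^{T}D_t(\epsilon)_{i_t,i_{t-1}}.
\end{equation*}

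Each path $i_0=0,i_1,\ldots,i_T=0$ decomposes uniquely into its maximal runs on which $i_t=0$ and the maximal intervening blocks on which $i_t\ne 0$. Each such block, together with the $0$-positions immediately before and after it, is an element of $\mE$, and recording these blocks together with their starting times in $\{2,\ldots,T-1\}$ gives a bijection between admissible paths and excursion sequences $\he\in\hE_T$; the constant path corresponds to $\he^0$.

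To obtain the claimed identity, factor out the contribution of the constant path, which is $\e^{T\epsilon\Delta_0+\sum_{t=1}^T X_t^{(0)}}$, and track the \emph{excess} contributed by each path relative to this reference. Time steps with $i_t=i_{t-1}=0$ cancel against the reference factor $\e^{\epsilon\Delta_0+X_t^{(0)}}$. For a time step $t_i+s$ with $s\in\{0,\ldots,|\he_i|\}$ inside the $i$-th excursion $\ee:=\he_i$, either $s\in \Kappa(\ee)$, in which case $D_{t_i+s}(\epsilon)_{\ee_{s+1},\ee_s}=\epsilon A_{t_i+s}(\ee_{s+1},\ee_s)$ has log-excess $\alpha_{t_i+s}(\ee_s,\ee_{s+1})$ over the reference, or $s\notin \Kappa(\ee)$ and $D_{t_i+s}(\epsilon)_{\ee_s,\ee_s}=\e^{\epsilon\Delta_{\ee_s}+X_{t_i+s}^{(\ee_s)}}$ has log-excess $\tX_{t_i+s}^{(\ee_s)}-\epsilon\tD_{\ee_s}$. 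These are exactly the summands in \eqref{E:eebracket}, so the total log-excess of the $i$-th excursion equals $\he_i[t_i;\bX,\bA]$, and summing over $i$ gives $\he[\bX,\bA]$ as in \eqref{E:hesum}. Consequently
\begin{equation*}
R_T(\epsilon)_{0,0}=\exp\!\Bigl(T\epsilon\Delta_0+\sum_{t=1}^{T}X_t^{(0)}\Bigr)\sum_{\he\in\hE_T}\e^{\he[\bX,\bA]},
\end{equation*}
and separating the null-excursion term (which contributes $\e^0=1$) before taking logarithms produces the identity in the lemma.

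The main point to check is simply that the index conventions line up: with $t_i$ the starting time of the $i$-th excursion in the sense of the definition of $\hE_T$ (so that $\he_i$ occupies the matrix factors at times $t_i,t_i+1,\ldots,t_i+|\he_i|$), every one of the $T$ matrix factors along the path is assigned unambiguously either to a $0$-sojourn (which cancels against the reference) or to a specific stay- or transition-step within exactly one excursion. Once this matching is written down carefully the rest is a purely algebraic identity, with no probabilistic content.
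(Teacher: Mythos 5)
Your argument is correct and is essentially the paper's own proof: expand $R_T(\epsilon)_{0,0}$ as a sum over paths from $0$ to $0$, identify paths bijectively with excursion sequences, and measure each path's contribution as a log-excess over the all-zero reference path, which is exactly how the paper identifies the log-ratios $\log\bigl(D_t^{*}(j,j')/D_t^{*}(0,0)\bigr)$ with the $\alpha$- and $\tX$-terms in $\he[\bX,\bA]$. The only difference is cosmetic (you factor out the reference up front rather than writing each path's log as reference plus correction), so nothing further is needed.
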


\begin{proof}
We have, by definition,
$$
R_{T}(0,0)=\sum_{(x_{0},\dots,x_{T})} \prod_{t=1}^{T} D_{t}^{*}(x_{t},x_{t-1}),
$$
where the summation is over $(x_{0},\dots,x_{T})\in \{0,\dots,d-1\}^{T+1}$ with
$x_{0}=x_{T}=0$. Note that we may restrict the summation to $(T+1)$-tuples such
that $D_{t}^{*}(x_{t},x_{t-1})>0$, which will only be true when $(x_{t-1},x_{t})$ is an edge of $\M$.
Such sequences of states map one-to-one onto excursion sequences. The product corresponding to excursion sequence 
$\he=\bigl( (t_{i},\he_{i})\bigr)_{i=1}^{k}$ is
\begin{equation} \label{E:prodhe}
\prod_{i=1}^{k+1}\Bigl(\prod_{t=t_{i-1}+1}^{t_{i}} D_{t}^{*}(0,0) \Bigr) 
   \cdot \prod_{i=1}^{k}\Bigl( \prod_{t=1}^{|\he_{i}|} D^{*}_{t+t_{i}} \bigl((\he_{i})_{t}, (\he_{i})_{t+1}\bigr) \Bigr),
\end{equation}
where $t_{0}=0$ and $t_{k+1}=T$.

We have $D_{t}^{*}(0,0)=\e^{X_{t}^{0)}+\epsilon\Delta_{0}}$. Thus, we may write the log of the expression in \eqref{E:prodhe} as
\begin{equation} \label{E:prodhe2}
\sum_{t=1}^{T}  \bigl( X_{t}^{0)}+\epsilon\Delta_{0} \bigr)  - \sum_{i=1}^{k} \sum_{t=1}^{|\he_{i}|} 
   \log \frac{D^{*}_{t+t_{i}} \bigl((\he_{i})_{t}, (\he_{i})_{t+1}\bigr)}{D^{*}_{t+t_{i}}(0,0)}
\end{equation}
Noting that
$$
\log \frac{D^{*}_{t+t_{i}} (j,j)}{D^{*}_{t+t_{i}}(0,0)}=
     \tX_{t+t_{i}}^{(j)}-\epsilon \tD_{j},
$$
and for $j\ne j'$,
$$
\log \frac{D^{*}_{t+t_{i}} (j,j')}{D^{*}_{t+t_{i}}(0,0)}=
     \log\epsilon A_{t}(j,j')-\log X_{t}^{(0)}-\epsilon \Delta_{0}=\alpha_{t+t_{i}}(j',j).
$$
Since $\Kappa(\he_{i})$ is precisely the set of $t$ such that $(\he_{i})_{t}\ne (\he_{i})_{t+1}$,
this means that \eqref{E:prodhe2} is precisely the same as $\he_{i}[t_{i};\bX,\bA]$,
which completes the proof.
\end{proof}

Thus
\begin{equation}  \label{E:firstLB}
\log R_{T}(0,0)-\epsilon T\Delta_{0}- \sum_{i=1}^{T} X_{i}^{(0)} \ge  \max_{\he\in\hE_{T}} \he[\bX,\bA]  ,
\end{equation}
and
\begin{equation}  \label{E:firstUB}
\begin{split}
\log R_{T}(0,0)&-\epsilon T\Delta_{0}- \sum_{i=1}^{T} X_{i}^{(0)}\\
  &\le 3\log T+\max_{1\le k,n,m\le T} \Bigl( \log \#\hE_{T;k,n,m} +\max_{\he\in\hE_{T;k,n,m}} \he[\bX,\bA]  \Bigr) .
\end{split}
\end{equation}
Combining this with \eqref{E:aepsilon0} yields the bounds we will use:
\begin{equation}  \label{E:secondLB}
a(\epsilon)-a(0) \ge  \epsilon \Delta_{0}+\liminf_{T\to\infty}T^{-1}\max_{\he\in\hE_{T}} \he[\bX,\bA]  ,
\end{equation}
and
\begin{equation}  \label{E:secondUB}
\begin{split}
a(\epsilon)&-a(0) \le \epsilon \Delta_{0}\\
&+ \limsup_{T\to\infty}T^{-1} \max_{1\le k,n,m\le T}\Bigl( \log  \#\hE_{T;k,n,m} +\max_{\he\in\hE_{T;k,n,m}} \he[\bX,\bA ]  \Bigr) .
\end{split}
\end{equation}

\section{Distinct growth rates}

\subsection{Derivation of the upper bound}  \label{sec:distinctUB}
We prove the upper bound in \eqref{E:polyepsMoCd3}.
We may assume that $\Delta\equiv 0$, since decreasing $\Delta$ can only decrease $a(\epsilon)-a(0)$. Similarly, we may replace $A_{t}(i,j)$ by
$A_{t}(i,j)\vee 1$ for any $(i,j)\in\M$.
That is, we put a floor under those off-diagonal elements which are
allowable migrations. This avoids the nuisance of having entries
be sometimes 0, and again, an upper bound that holds under these conditions
will hold {\em a fortiori} under the original conditions.

We also begin by assuming that there is a single site --- which we may
call site 1 without loss of generality --- that minimizes both $\kappa_{j}$ and
$\rho_{j}$. Since we are only concerned with the behavior of $a(\epsilon)$ as
$\epsilon\downarrow 0$, we will, without loss of generality, confine our attention to
\begin{equation} \label{E:smalleps}
0\le\epsilon\le \e^{-4\ttau_{A} (\rho_{1}\vee 1)}.
\end{equation}

An element of $\hE_{T;k,n,m}$ may be determined by the following choices:
\begin{enumerate}
\item Choose $k$ points out of $T$ where the excursions begin,
yielding no more than $\binom{T}{k}$ possibilities;
\item Choose $k$ numbers for the lengths of the excursions that add up
to $n$, yielding no more than $\binom{n}{k}$ possibilities;
\item Choose $m-2k$ timepoints within these excursions as times when there
is a change of site, yielding at most $\binom{n}{m-2k}$ possibilities;
\item There are no more than $d^{m}$ ways to choose the sites to which
the excursions move at the $m$ times when there is a change.
\end{enumerate}
A crude bound based on Stirling's Formula is
$$
\log\binom{a}{b}\le b+b\log\frac{a}{b},
$$
which holds for all positive integers $b$ and $0\le a\le b$, 
as long as we adopt the convention $0\cdot \log 0=0\cdot \log \infty=0$.
Then
\begin{equation} \label{E:boundlogFTkn}
\log \#\hE_{T;k,n,m} \le m\log d+(m-2k)\log\frac{n}{m-2k}
   +k\log\frac{n}{k} +T\log\frac{T}{k}.
\end{equation}

\begin{Claim}  \label{C:smallP}
For any positive $\constcp>8\kappa_{1}$, and any
$$z\ge \epsilon^{2\kappa_{1}\rho_{1}/(1+2\rho_{1})}
   \cdot (\log \epsilon^{-1})^{\constcp},
$$
we have
\begin{equation} \label{E:theclaim}
\limsup_{T\to\infty}  T^{-1} \log\P\bigl\{ \max_{\he\in\hE_{T;k,n,m}}\bigl( \he[\bX,\bA]+\log\#\hE_{T;k,n,m}\bigr)\ge zT\bigr\} <0
\end{equation}
for all $\epsilon>0$ sufficiently small.
\end{Claim}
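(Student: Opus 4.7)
My plan is to mount a Chernoff / union-bound argument, whose heart will be a saddle-point optimisation over the sequence-type indices $(k,n,m)$.

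First, for each fixed $\he\in\hE_{T;k,n,m}$, I would decompose $\he[\bX,\bA]$ according to \eqref{E:eebracket}--\eqref{E:hesum} as a sum of at most $m$ ``migration'' terms (each, after absorbing the deterministic $\log\epsilon$, centered sub-Gaussian with variance factor $\leq\ttau_{A}$ and mean $\leq\mu_{A}$) plus at most $n$ ``stay'' terms $\tX_{t}^{(j)}-\epsilon\tD_{j}$ of variance factor $\leq\ttau^{(j)}$ and mean close to $-\tmu^{(j)}$. Since site $1$ minimises $\rho^{(j)}$, putting all stay time at site $1$ maximises the upper-tail probability at any prescribed deviation, so Lemma \ref{L:Orlicztau} \eqref{E:sGsum2} will yield
$$
\P\{\he[\bX,\bA]\geq y\}\leq\exp\!\Bigl(-\frac{(y+m\log\epsilon^{-1}-m\mu_{A}+n\tmu^{(1)})^{2}}{2(m\ttau_{A}+n\ttau^{(1)})}\Bigr).
$$

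Next, a union bound over $\hE_{T;k,n,m}$ using \eqref{E:boundlogFTkn}, combined with the substitution $y=zT-\mathcal{N}$ (where $\mathcal{N}:=\log\#\hE_{T;k,n,m}$), will reduce \eqref{E:theclaim} to showing that
$$
\frac{(zT+m\log\epsilon^{-1}-m\mu_{A}+n\tmu^{(1)}-\mathcal{N})^{2}}{2(m\ttau_{A}+n\ttau^{(1)})}\;\geq\;\mathcal{N}+cT
$$
for some $c>0$, uniformly in $(k,n,m)$, when $z\geq\epsilon^{2\kappa_{1}\rho_{1}/(1+2\rho_{1})}(\log\epsilon^{-1})^{\constcp}$ and $\epsilon$ is small enough.

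The substantive step is verifying this inequality by a saddle-point analysis. Writing $\ell=n/k$ for the mean excursion length and using $m\geq k\kappa_{1}$ (the adversary's worst case is excursions through site $1$ with minimum diameter), the optimum in $\ell$ at fixed $k$ is attained at $\ell^{*}\asymp\kappa_{1}\log\epsilon^{-1}/\tmu^{(1)}$---the balance point where migration cost equals cumulative stay cost. At this $\ell^{*}$ the drift is $\asymp 2k\kappa_{1}\log\epsilon^{-1}$ and the variance factor is $\asymp k\kappa_{1}(\log\epsilon^{-1})/\rho_{1}$; solving the reduced inequality for the critical $z$ should yield the claimed threshold $\epsilon^{2\kappa_{1}\rho_{1}/(1+2\rho_{1})}$, with the denominator $1+2\rho_{1}$ arising because both the drift and the square-root of the variance contribution scale as $\log\epsilon^{-1}$, with relative weight $\rho_{1}$. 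The prefactor $(\log\epsilon^{-1})^{\constcp}$, with $\constcp>8\kappa_{1}$, absorbs $\mathcal{N}$---in particular the $\mu\log d$ and $k\log(T/k)$ entropy contributions in \eqref{E:boundlogFTkn}---together with the subleading corrections at each stage.

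The hard part will be carrying out this saddle-point optimisation uniformly in $(k,n,m)$. Several regimes need separate treatment: small $k$ (where $\mathcal{N}=O(\log T)$ is subleading but the binding constraint is the $\ell$-optimisation), the critical scaling $k\asymp T/\log\epsilon^{-1}$ that produces the stated exponent, and $k\asymp T$ (where entropy is linear in $T$ and competes directly with drift and fluctuation). The generous requirement $\constcp>8\kappa_{1}$ (rather than something like $\constcp>2$) reflects the need to dominate the $\kappa_{1}\log d$-per-excursion entropy cost with enough polylogarithmic slack.
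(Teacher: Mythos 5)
Your architecture --- a per-sequence sub-Gaussian Chernoff bound, a union bound against the count \eqref{E:boundlogFTkn}, and a saddle-point optimisation over the type parameters --- is the same as the paper's, but your first substantive step fails as stated. You claim that, within a class of fixed $(k,n,m)$, the sequence spending all of its stay time at site $1$ maximises the upper-tail probability, and you then apply your displayed bound, with drift $n\tmu^{(1)}$ and variance factor $n\ttau^{(1)}$, to every member of the class. The minimality of $\rho^{(1)}=\tmu^{(1)}/\ttau^{(1)}$ constrains only the \emph{ratio} of mean deficit to variance factor at the other sites, not the two quantities separately. For a sequence whose stays visit sites $j_{1},\dots,j_{n}$ the correct exponent is $(P+D)^{2}/\bigl(2(V+m\ttau_{A})\bigr)$ with $P=y+m\log\epsilon^{-1}-m\mu_{A}$, $D=\sum_{i}\tmu^{(j_{i})}$ and $V=\sum_{i}\ttau^{(j_{i})}\le D/\rho_{1}$, and the function $D\mapsto (P+D)^{2}/(D/\rho_{1}+m\ttau_{A})$ is not monotone: it has an interior minimum at $D=P-2\rho_{1}m\ttau_{A}$. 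Hence a site with $\rho^{(j)}$ only slightly above $\rho_{1}$ but with $\tmu^{(j)}$ much larger than $\tmu^{(1)}$ can produce a strictly smaller exponent --- a weaker tail bound --- than the all-at-site-$1$ sequence of the same length $n$, so the union bound over the class, as you set it up, is not valid. This is precisely why the paper does not fix the site composition: it introduces $S=k^{-1}\sum_{i}\tmu^{(j_{i})}$ as a free variable, uses only $\ttau^{(j_{i})}\le\tmu^{(j_{i})}/\rho_{1}$, and carries the optimisation over $S\ge\mu_{*}$ (split into the regimes $S\le\log^{2}\epsilon$ and $S>\log^{2}\epsilon$) jointly with $\beta=m/k$ and $u=z/\zeta$.

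A second, related problem is that your saddle point is in the wrong place. The choice $\ell^{*}\asymp\kappa_{1}\log\epsilon^{-1}/\tmu^{(1)}$, i.e.\ per-excursion deficit $\approx\kappa_{1}\log\epsilon^{-1}$, is the one used in the \emph{lower}-bound construction of section \ref{sec:distinctLB}; for the upper bound the binding value is $S\approx\kappa_{1}\log\epsilon^{-1}/(1+2\rho_{1})$, where the squared-drift-over-variance term equals $\frac{2\kappa_{1}\rho_{1}}{1+2\rho_{1}}\log\epsilon^{-1}$ and matches $-\log z$ to leading order, leaving only the polylogarithmic margin; in the paper this is extracted by first showing, via Lemma \ref{L:bounds}, that the expression is decreasing in $\beta$ (so $\beta=\kappa_{1}$) and then optimising in $S$, and any $\constcp>2\kappa_{1}$ suffices there, so your rationale for $\constcp>8\kappa_{1}$ as an entropy requirement is not the actual mechanism. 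Since the uniform verification over $(k,n,m)$ (together with the passage from the per-class exponential estimate to the limsup in $T$, which the paper handles with its $R(U)$ argument) is the entire content of the Claim, and your sketch both rests on an invalid per-sequence bound and locates the critical regime incorrectly, the proposal does not yet establish \eqref{E:theclaim}; the overall architecture, however, is the right one and is repairable along the paper's lines.
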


We prove this claim in section \ref{sec:proveClaim}, and proceed here under this assumption.
This means that
$$
\sum_{T=T_{0}}^{\infty} \P\left\{ T^{-1}\max_{1\le k,n,m\le T}\Bigl(\log\#\hE_{T;k,n,m}+\max_{\he\in\hE_{T;k,n,m}} \he[\bX,\bA] \Bigr)
   \ge z \right\}<\infty.
$$
By the Borel-Cantelli Lemma, this implies that with probability 1 this event
occurs only finitely often. It follows that the limsup is smaller than $z$ almost surely, and hence,
by \eqref{E:secondUB}, that
\begin{equation} \label{E:aepsdone}
a(\epsilon)-a(0)\le  (\log\epsilon^{-1})^{\constcp} \epsilon^{2\kappa_{1}\rho_{1}/(1+2\rho_{1})}.
\end{equation}

It remains only to clear away the assumption that that $\kappa_{j}$ and $\rho_{j}$ are both minimized at site 1.
We do this by stratifying the excursions further by their diameter (recall the definition from section \ref{sec:Excursions}).
Define 
$$
\breve{\rho}(\kappa):=\min\bigl\{ \rho_{j}\, : \, \kappa_{j}\le \kappa\bigr\}.
$$
If $\ee$ is an excursion with diameter $\kappa$, then any site $j$ included in $\ee$
has $\kappa_{j}\le \kappa$, hence also $\rho_{j}\ge \breve{\rho}(\kappa)$. Furthermore,
\begin{equation} \label{E:minjkappa}
 \min_{1\le j\le d-1}2\rho_{j}\kappa_{j}/(1+2\rho_{j})=\min_{2\le \kappa\le d-1} 2\rho(\kappa)\kappa/(1+2\rho(\kappa))
\end{equation}

The maximum in \eqref{E:secondUB} may be written as a maximum over $(k_{2},\dots,k_{d-1})$, representing
the number of excursions whose diameter is $2,3,\dots,d-1$, with
the constraint $\sum k_{\kappa}=k$. We write $\hE_{T;k,n,m}^{(\kappa)}$ for the excursion sequences consisting of 
$k$ excursions, all of which have diameter $\kappa$; and $\hE_{T;(k_{\kappa}),n,m}$ for the set of excursion sequences that have
exactly $k_{\kappa}$ excursions with diameter $\kappa$. Then $\hE_{T;(k_{\kappa}),n,m}$ naturally includes the direct sum
of $\hE_{T;k_{\kappa},n,m}^{(\kappa)}$.

Using the general fact that the maximum of a sum is smaller than the sum of the maxima,
\begin{align*}
\max_{\he\in\hE_{T;k,n,m}} \he[\bX,\bA]&=\max_{\sum k_{\kappa}=k} \:\max_{\he\in\hE_{T;(k_{\kappa}),n,m}} \he[\bX,\bA]\\
&\le \max_{\sum k_{\kappa}=k} \:\max_{\he_{(\kappa)}\in\hE_{T;k_{\kappa},n,m}^{(\kappa)}} \sum_{\kappa}\he_{(\kappa)}[\bX,\bA]\\
&\le \sum_{\kappa}\max_{1\le  k_{\kappa}\le T} \:\max_{\he_{(\kappa)}\in\hE_{T;k_{\kappa},n,m}^{(j)}} \he_{(\kappa)}[\bX,\bA].
\end{align*}
Thus we have
\begin{align*}
\max_{1\le k,n,m\le T} \Bigl(\log\#\hE_{T;k,n,m}+\max_{\he\in\hE_{T;k,n,m}} \he[\bX,\bA] \Bigr)
&\le \sum_{\kappa}\max_{1\le  k_{\kappa},n_{\kappa},m_{\kappa}\le T} \:\max_{\he_{(\kappa)}\in\hE_{T;k_{\kappa},n_{\kappa},m_{\kappa}}^{(\kappa)}} \he_{(\kappa)}[\bX,\bA].
\end{align*}

Because all excursions in $\hE_{T;k_{\kappa},n_{\kappa},m_{\kappa}}^{(\kappa)}$ pass through only sites $j$
with $\rho_{j}\ge \breve{\rho}(\kappa)$, the same argument used for the upper bound in \eqref{E:aepsdone} may be applied to show that
almost surely 
$$
\limsup_{T\to\infty} T^{-1}\max_{1\le  k_{\kappa},n_{\kappa},m_{\kappa}\le T} \:\max_{\he_{(\kappa)}\in\hE_{T;k_{\kappa},n_{\kappa},m_{\kappa}}^{(\kappa)}} \he_{(\kappa)}[\bX,\bA]
   \le c_{\kappa}(\log\epsilon^{-1})^{c'_{\kappa}} \epsilon^{2\kappa\breve{\rho}(\kappa)/(1+2\breve{\rho}(\kappa)}.
$$
It follows that for $c:=(d-2)\cdot \max c_{\kappa}$ and $c':=\max c'_{\kappa}$,
\begin{align*}
\limsup_{T\to\infty} T^{-1}\max_{1\le k,n,m\le T} \Bigl(\log\#\hE_{T;k,n,m}+&\max_{\he\in\hE_{T;k,n,m}} \he[\bX,\bA] \Bigr)\\
   &\le \sum_{\kappa} c_{\kappa}(\log\epsilon^{-1})^{c'_{\kappa}} \epsilon^{2\kappa\breve{\rho}(\kappa)/(1+2\breve{\rho}(\kappa))}\\
   &\le c(\log\epsilon^{-1})^{c'} \epsilon^{\min_{1\le j\le d-1}2\kappa_{j}\rho_{j}/(1+2\rho_{j})}
\end{align*}
by \eqref{E:minjkappa}, which completes the proof.

\subsection{Derivation of the lower bound}  \label{sec:distinctLB}
We may assume that $\kappa_{j}\rho^{(j)}_{*}$ attains its minimum at $j=1$.
We will write $\kappa$ and $\rho_{*}$ (with no superscript attached) for $\kappa_{1}$
and $\rho_{*}^{(1)}$.
Let $0=j_{0},j_{1},j_{2},\dots,j_{I}=1,j_{I+1},\dots, j_{\kappa-1},j_{\kappa}=0$,
be a cycle from 0 in $\M$, passing through 1.
We may fix a real number $A_{*}$ and $p>0$ such that 
\begin{equation} \label{E:prodA}
\P\Bigl\{ \sum_{i=0}^{\kappa-1}\bigl(\log X_{t}^{(0)}-\log A_{t+i}(j_{i},j_{i+1}) \bigr)<\kappa A_{*}\Bigr\} \ge p.
\end{equation}

Assume that $\epsilon\le \e^{-1}$ and $T>\log\epsilon^{-1}$.
Defining $k=\lfloor T/m\rfloor$ and $m=\lfloor \kappa\log\epsilon^{-1}/\tmu^{(1)}\rfloor$, we will apply \eqref{E:secondLB} by considering only excursions of length exactly $m-1$,
which proceed exactly through the sequence of sites
$0=j_{0},j_{1},\dots,j_{I-1},1,\dots,1,j_{I+1},\dots,j_{\kappa-1},j_{\kappa}=0$, where site 1 is repeated exactly $m-\kappa+1$ times. 
The basic idea is that the excursion fills a time block of length $m$, proceeding
as quickly as possible from 0 to 1, remaining as long as possible at 1, and then
returning to 0.

We define the standard excursion $\ee_{\circ}:=(j_{1},\dots,j_{I-1},1,\dots,1,
j_{I+1},\dots,j_{\kappa-1})$, with $m-\kappa+1$ repetitions of site 1; and an excursion sequence
$\he_{\circ}$ consisting of those pairs $(\ell m+1,\ee_{\ell})$ for which
\begin{equation} \label{E:fgood}
Y_{\ell}:=\ee_{\circ}\bigl[ \ell m+1; \bX,\bA \bigr] >0.
\end{equation}
That is, $\he_{\circ}$ is put together from identical excursions of form $\ee_{\circ}$
which can start only at times $\ell m+1$. Each one of the $k$
possible excursions is included precisely when its contribution
to the sum would be positive.

We have
$$
\he[\bX,\bA]=\sum_{\ell=0}^{k-1}(Y_{\ell})_{+} \,.
$$
Since the excursion contributions $Y_{\ell}$ are all independent, combining
\eqref{E:secondLB} with the Strong Law of Large Numbers yields
\begin{equation} \label{E:helim}
a(\epsilon)-a(0) \ge \frac{\E[(Y_{\ell})_{+}]}{m} -\epsilon\Delta_{0}.
\end{equation}

We now observe that for any $\ell$
$$
Y_{\ell}
=\sum_{i=0}^{I-1} \alpha_{\ell m+i}(j_{i},j_{i+1})
  +\sum_{i=I}^{\kappa-1} \alpha_{(\ell+1) m-\kappa+i}(j_{i},j_{i+1})
   \,+\, \sum_{t=\ell m+I}^{(\ell+1)m-\kappa+I-1} \tX_{t}^{(1)} .
$$
Note that the $\alpha_{t}$ terms are independent of the $\tX_{t}^{(1)}$ terms.

By \eqref{E:prodA}, for any $y>0$,
$$
\P\bigl\{ Y_{\ell}>y \bigr\} \ge p\P\Bigl\{ Z > y+\kappa\bigl(\log \epsilon^{-1}+A_{*}-\epsilon \Delta_{0}\bigr) +(m-\kappa+1)\tmu^{(1)}\Bigr\},
$$
where
$$
Z:=\sum_{t=\ell m+I}^{(\ell+1)m-\kappa+I-1} (\tX_{t}^{(1)}+\tmu^{(1)})
$$
is a mean-zero sub-Gaussian random variable with sub-variance
bounded below by $(m-\kappa+1)\tmu^{(1)}/\rho_{*}$.

By Cram\'er's Theorem \cite[Theorem 2.2.3]{DZ09} and the bound \eqref{E:FLbound},
\begin{align*}
\lim_{m\to\infty} (m-\kappa+1)^{-1}\log\P&\Bigl\{ (m-\kappa+1)^{-1}Z > y+
2\tmu^{(1)}+\frac{\kappa}{m-\kappa+1}\left(\log \epsilon^{-1}+A_{*}+\epsilon \Delta_{0}\right) \Bigr\}\\
 &\ge  -I(2\tmu^{(1)}+y )\\
	&\ge -\frac{\rho_{*}(2\tmu^{(1)}+y )^{2}}{2\tmu^{(1)}}.
\end{align*}

We then have for any $\delta>0$, for fixed $y$ and 
$m$ sufficiently large (that is, for $\epsilon$ sufficiently small)
$$
\P\Bigl\{ \frac{Y_{\ell}}{m} > y \Bigr\} \ge
   p\exp\left\{-\frac{m(2\tmu^{(1)}+y )^{2}\rho_{*}}{2\tmu^{(1)}}-m\frac{\tmu^{(1)}\delta}{2\kappa}
   \right\}.
$$
If
$$
\epsilon\le \exp\left\{-\frac{1}{2\sqrt{\mu\delta}}-\frac{\tmu^{(1)}}{\kappa} \right\},
$$
then by the definition $m:=\lfloor \kappa \log \epsilon^{-1}/\tmu^{(1)}\rfloor$,
$$
\frac{1}{8m\tmu^{(1)} \rho_{*}}\le \frac{\delta}{2}\log \epsilon^{-1},
$$
it follows, by Lemma \ref{L:bounds3}, that
\begin{align*}
\E\bigl[ (Y_{\ell})_{+}\bigr] &\ge  \frac{p}{m\rho_{*}}\exp\left\{-\frac{m(2\tmu^{(1)})^{2}\rho_{*}}
{2\tmu^{(1)}}-\frac{\delta}{2}\log \epsilon^{-1} -\frac{1}{8m\tmu^{(1)}\rho_{*}}\right\} \\
   &\ge \frac{p}{m\rho_{*}} \exp\left\{-\left(2\kappa\rho_{*}+\delta \right)\log \epsilon^{-1}\right\}\\
   &\ge \frac{p\kappa}{\mu^{(1)}\rho_{*} \log\epsilon^{-1}}\epsilon^{2\kappa\rho_{*}+\delta}.
\end{align*}
Combining this with \eqref{E:hesum} completes the proof.

\subsection{Proof of Claim \ref{C:smallP}} \label{sec:proveClaim}
Since the probability is decreasing in $z$, it will suffice to show the statement is
true for
\begin{equation} \label{E:choosez}
z = \epsilon^{2\kappa_{1}\rho_{1}/(1+2\rho_{1})}
   \cdot (\log \epsilon^{-1})^{\constcp},
\end{equation}
where $\constcp$ is any constant larger than $2\kappa_{1}$.

We define 
$$
\zeta:=\frac{k}{T},\quad\nu:= \frac{n}{k},\quad\beta:=\frac{m}{k}.
$$
That is, $\zeta$ is the rate of excursions per unit time; $\nu$ is the average
length of excursions; and $\beta$ is the average diameter of excursions.
We have the constraints $1/\zeta\ge\nu\ge 1$ and $1/\zeta\ge \beta\ge \kappa_{1}$
(since $\kappa_{1}$ is the minimum $\kappa_{j}$, hence the minimum 
number of changes in each excursion).
Then the bound \eqref{E:boundlogFTkn} may be written as
\begin{equation} \label{E:boundlogFTkn2}
\begin{split}
\log& \#\hE_{T;k,n,m} \\
&\le \zeta T\left(\beta\log d
   +(\beta-1)\log\nu -(\beta-2)\log(\beta-2)-\log\zeta\right).
\end{split}
\end{equation}

Suppose now we fix some element $\he$ of $\hE_{T;k,n,m}$,
and list all the states of all the excursions in order as $j_{1},\dots,j_{n}$,
we have
$$
\E\bigl[\he[\bX;\bA]\bigr]\le m\mu_{A}+m\log\epsilon-\sum_{i=1}^{n} \tmu^{(j_{i})}
$$
and the random variable $Y:=\he[\bX;\bA]-\E[\he[\bX;\bA]]$
is sub-Gaussian with variance factor bounded by
$$
\tau(Y)\le \sum_{i=1}^{n} \ttau^{(j_{i})}+m\ttau_{A}
$$
by Lemma \ref{L:Orlicztau}.

For any $x,z>0$, by \eqref{E:taubound}
\begin{align*}
\P\bigl\{ \bigl( \he[\bX,\bA]&+x\bigr))\ge zT\bigr\} 
\le \P\Bigl\{ Y \ge \sum_{i=1}^{n}\tmu^{(j_{i})}+ m\mu_{A}+m\log\epsilon^{-1}
   + zT-x\Bigr\}\\
&\le  \exp\biggl\{-\frac12 \left(\sum_{i=1}^{n} \ttau^{(j_{i})}+m\ttau_{A}
\right)^{-1} \left( \sum_{i=1}^{n}\tmu^{(j_{i})} \,+\,
   m\mu_{A}+m\log\epsilon^{-1}+zT -x\right)^{2} \biggr\}.
\end{align*}
We are assuming that $\rho_{j}$ is minimized
at $j=1$, $\ttau^{(j_{i})}\le \tmu^{(j_{i})}/\rho_{1}$, so that
\begin{align*}
\log\P\bigl\{ \bigl( \he[\bX,\bA]&+x\bigr))\ge zT\bigr\} \\
&\le  -\frac12 \left(\frac{1}{\rho_{1}}\sum_{i=1}^{n} \tmu^{(j_{i})}+m\ttau_{A}
\right)^{-1} \left( 
   \sum_{i=1}^{n}\tmu^{(j_{i})}+m\mu_{A}+m\log\epsilon^{-1}+zT -x\right)^{2}.
\end{align*}

Taking $x=\log\#\hE_{T;k,n,m}$ and substituting \eqref{E:boundlogFTkn2},
we get
\begin{equation} \label{E:Plogbound}
\begin{split}
&\log\P\bigl\{ \max_{\he\in\hE_{T;k,n,m}}\bigl( \he[\bX,\bA]+\log\#\hE_{T;k,n,m}\bigr))
   \ge zT\bigr\} \\
&\le \log\#\hE_{T;k,n,m}+
\max_{\he\in\hE_{T;k,n,m}}\log\P\bigl\{ \bigl( \he[\bX,\bA]+\log\#\hE_{T;k,n,m}\bigr))
   \ge zT\bigr\} \\
&\le T \sup_{S\ge \mu_{*}}  \sup_{\beta\ge\kappa_{1}} \sup_{0\le \zeta\le 1}\zeta \biggl[\beta\log dS/\mu_{*} -\log\zeta
   -(\beta-2)\log(\beta-2)\\
&\qquad - \left(2S/\rho_{1}   +2\ttau_{A}\beta
\right)^{-1} \biggl( S +\log\zeta+\frac{z}{\zeta}+(\beta-2)\log(\beta-2)
   +\beta\Bigl(\c{c1}+\log(S\epsilon)^{-1}  \Bigr)\biggr)^{2} \,\biggr],
\end{split}
\end{equation}
where 
\begin{align*}
\mu_{*}&:=\min \mu^{(j)},\qquad\mu^{*}:=\max \mu^{(j)},\\
\Cr{c1}&:=\mu_{A}+\mu_{*}-\log d,\\
S&:=k^{-1}\sum_{i=1}^{n}\tmu^{(j_{i})} \ge \nu\mu_{*}\ge \mu_{*}
\end{align*}
It is important to note that our assumptions ensure that $\mu_{*}>0$.

We need to show that this supremum is negative. Since the limit as $\zeta\downarrow 0$
is $-\infty$, it will suffice to show this for all possible values of the quantity in
square brackets, which we denote by
\begin{equation} \label{E:Theta}
\begin{split}
 \Theta:=\beta\log dS/\mu_{*}&-\log z+\log u -(\beta-2)\log(\beta-2)\\
&- \frac{\left( S -\log u +u +\log z +(\beta-2)\log(\beta-2)
   +\beta\Bigl(\Cr{c1}+\log(S\epsilon)^{-1} \Bigr)\right)^{2}}
{2S/\rho_{1}  +2\ttau_{A}\beta }
\end{split}
\end{equation}
where $u=z/\zeta$. Here we have taken advantage of the fact that $\log\zeta<0$.

We consider now two different regions for $S$:
\begin{enumerate}
\item $S >\log^{2}\epsilon$\text{ and}
 \label{nurange1}
\item $\mu_{*}\le S \le \log^{2}\epsilon.$
\label{nurange3}
\end{enumerate}
We may assume, without loss of generality, that $\epsilon$ is small enough that the bounds
will be in the right order; that is, $\log^{2} \epsilon>\mu_{*}$.

\nc{\hbeta}{\hat\beta}
\noindent
{\bf Case \ref{nurange1}} 
We define $\hbeta:=(\beta-2)/S\ge 0$, and rewrite $\Theta$ as
\begin{align*}
 \Theta=(S\hbeta+2)\log dS/\mu_{*}-\log &z+\log u -S\hbeta\log S\hbeta\\
&- \frac{\left( S -\log u +u +\log z +S\hbeta\log S\hbeta
   +(S\hbeta+2)\Bigl(\Cr{c1}+\log(S\epsilon)^{-1} \Bigr)\right)^{2}}
{2S/\rho_{1}  +2\ttau_{A}S\hbeta +4\ttau_{A}}.
\end{align*}
Again, since we are only interested in showing that this expression must be negative
and bounded away from zero, we may equally well analyze $\Theta/S$. We have then
\begin{equation} \label{E:ThetaS}
\begin{split}
\frac{\Theta}{S}&=\c{c2}\hbeta -\hbeta\log\hbeta  +S^{-1}\left(2\Cr{c2} -\log z+\log u\right)  \\
&\quad- \frac{\left(1 +S^{-1}(2\Cr{c1}+ u-\log u+\log \epsilon^{-2}) +S^{-1}\log z
   +\hbeta(\Cr{c1}+\log\hbeta\epsilon^{-1})+2S^{-1}\log S \right)^{2}}
   {2/\rho_{1}  +2\ttau_{A}\hbeta +4\ttau_{A}/S}\\
\end{split}
\end{equation}
where $\Cr{c2}:=\log d/\mu_{*}$.

\nc{\hbm}{\hbeta_{max}}
We may find $\epsilon_{0}$ such that for all $0<\epsilon\le \epsilon_{0}$, 
\begin{align}
&\log\epsilon<-2|\Cr{c1}|-2,\text{ which implies }\hat\beta\bigl( \Cr{c1}+\log \hbeta\epsilon^{-1}\bigr)  \ge -\epsilon \e^{-\Cr{c1}-1}\ge -\e^{-3} ; \label{E:epsbound2}\\
&\log \epsilon<-\sqrt{2\ttau_{A}\rho_{1}} ;\label{E:epsbound3}\\
&\frac{2|\Cr{c2}|+\log (4/\rho_{1}+2+\ttau_{A})}{\log^{2}\epsilon}
   + \frac{2\rho_{1}\kappa_{1}}{(2\rho_{1}+1)\log\epsilon^{-1}}<\frac14 ; \label{E:epsdelta}\\
&\frac{(1+2\Cr{c10})^{2}}{\Cr{c10}\bigl(\Cr{c1}+\log\epsilon^{-1}\bigr) -2\Cr{c2}}
     <2\Cr{c10}
     ,\text{ where }\c{c10}:=(4/\rho_{1}+2\tau_{A})^{-1}\label{E:epsboundalpha}\\
&\Cr{c10}(\Cr{c1}+\log\epsilon^{-1})-2|\Cr{c2}| \ge 1.\label{E:epsalpha2}
\end{align}

We consider the derivative of $\Theta/S$ with respect to $u$, which is
$$
\frac{1}{Su}\left[1 -\frac{1+\hbeta(\Cr{c1}+\log\hbeta\epsilon^{-1})+S^{-1}(2\Cr{c1}+ u-\log u+\log \epsilon^{-2}) +S^{-1}\log z
   +2S^{-1}\log S}{1/\rho_{1}  +\ttau_{A}\hbeta +2\ttau_{A}/S}(u-1)\right].
$$
Because the numerator and denominator of the large fraction are both positive, $\Theta/S$ must
be increasing for $u\in (0,1)$. For $u>1$, using the bound \eqref{E:epsbound2}, the derivative is bounded above by
$$
\frac{1}{Su}\left[1 -\frac{1-e^{-3}+\hbeta(\Cr{c1}+\log\hbeta\epsilon^{-1}) +S^{-1}\log z}
   {2/\rho_{1}  +\ttau_{A}\hbeta }(u-1)\right],
$$
which will be negative for $u\ge 4/\rho_{1}+2+\ttau_{A}$. Thus
$$
\frac{\Theta}{S}\le \Cr{c2}\hbeta -\hbeta\log\hbeta 
- \frac{\left(\frac12
   +\hbeta(\Cr{c1}+\log\epsilon^{-1}) +\hbeta\log\hbeta \right)^{2}}
   {4/\rho_{1}  +2\ttau_{A}\hbeta },
$$
by \eqref{E:epsdelta}.

If $\hbeta >1$ then
$$
\frac{\Theta}{S}\le -\Cr{c10}(\Cr{c1}+\log\epsilon^{-1})^{2}+|\Cr{c2}|  \le -1
$$
by \eqref{E:epsalpha2}.

If $\hbeta\in [0,1]$, we may bound this by
$$
\frac{\Theta}{S}\le \Cr{c2}\hbeta -\hbeta\log\hbeta 
- \Cr{c10}\left(\frac{1}{2}   +\hbeta(\Cr{c1}+\log\epsilon^{-1}) +\hbeta\log\hbeta \right)^{2}
$$
where $\Cr{c10}$ is defined in \eqref{E:epsdelta}.
Given the bounds $\hbeta\log\hbeta\ge -\sqrt{\hbeta}$ and $\hbeta\le\sqrt{\hbeta}$
for $\hbeta\in[0,1]$, and given that 
$\Cr{c1}+\log\epsilon^{-1}>1$, we obtain the bound
$$
\left(1   +\hbeta(\Cr{c1}+\log\epsilon^{-1}) +\hbeta\log\hbeta \right)^{2}
   > \frac12-2\sqrt{\hbeta}+2(\Cr{c1}+\log\epsilon^{-1})\hbeta,
$$
and so
$$
\frac{\Theta}{S}\le \Bigl( \Cr{c2}-2\Cr{c10}\bigl(\Cr{c1}+\log\epsilon^{-1}\bigr)\Bigr)
\hbeta+(1+2\Cr{c10})\sqrt{\hbeta}   - \frac{\Cr{c10}}{4} 
   \le \frac{(1+2\Cr{c10})^{2}}{8\Cr{c10}\bigl(\Cr{c1}+\log\epsilon^{-1}\bigr) -4\Cr{c2}}
     - \frac{\Cr{c10}}{4} <0
$$
by \eqref{E:epsboundalpha}.\\

\noindent
{\bf Case \ref{nurange3}}
In the remaining cases we simplify the dependence on $\beta$ by using 
$$
(\beta-2)\log(\beta-2)\ge -\e^{-1}\ge -\beta\log 2 \text{ for }\beta\ge 2,
$$
to obtain
\begin{equation} \label{E:NewTheta}
\begin{split}
 \Theta\le\wt\Theta:=\beta\log 2dS&/\mu_{*}-\log z+\log u \\
&- \left(2S/\rho_{1}  +2\ttau_{A}\beta \right)^{-1} 
  \biggl( S -\log u +u +\log z 
   +\beta\Bigl(\Cr{c1}+\log(2S\epsilon)^{-1} \Bigr)\biggr)^{2} .
\end{split}
\end{equation}
We assume $\epsilon\le\epsilon_{0}$, chosen such that for all such $\epsilon$,
\begin{align}
&\frac{\kappa_{1}}{1+2\rho_{1}}\log \epsilon^{-1}+\kappa_{1}\left(\Cr{c1}-\log 2\mu_{*}\epsilon \right)>0;
   \label{E:epsbound23}\\
&\frac{\log\log\epsilon^{-1}}{\log\epsilon^{-1}}
   \le \frac{\rho_{1}}{2\rho_{1}+2};\label{E:epsbound25}\\
   &\log\epsilon^{-1}>\ttau_{A}\rho_{1} -\Cr{c1}+ \log4\log\epsilon^{-1}; 
     \label{E:epsboundC}\\
   &\log\log\epsilon^{-1}>\frac{\Cr{c5}}{(2\rho_{1}+1)(\constcp-2\kappa_{1})}  \label{E:epsbound26}\\
  &\quad \text{where } \label{E:c5}
\Cr{c5}:=\Cr{c4}-2\rho_{1}\kappa_{1}(\Cr{c1}-\log 2)-2\rho_{1}^{2}\ttau_{A} \kappa_{1},\quad
   \Cr{c4}:= \frac{1}{\rho_{1}}+ \kappa_{1}\log 2d/\mu_{*};\\
 &\log\epsilon^{-1}>\max\left\{ 8+2\Cr{c1}\, ,\, 8\Cr{c8}\bigl(\Cr{c2}-\log \Cr{c8}\bigr)\right\}, \\
&\qquad
 \text{where } 
  \c{c8}:=64\left(\frac{1}{\rho_{1}}  +\ttau_{A}e^{|\Cr{c2}|} +\frac{\ttau_{A}}{\mu_{*}}\right).\label{E:c8} 
\end{align}

We have
\begin{align*}
\frac{\partial\wt\Theta}{\partial u}&=
   \frac{1}{u}\left(1-\rho_{1}\Bigl(u-1\Bigr)\frac{S -\log u +u +\log z
   +\beta\bigl(\Cr{c1}-\log 2S -\log\epsilon  \bigr)}
   {S   +\rho_{1}\ttau_{A}\beta}\right)\\
&=    \frac{1}{u}\left(1-\rho_{1}\Bigl(u-1\Bigr)\frac{S -\log u +u +\log z\epsilon^{-2\beta\rho_{1}/(2\rho_{1}+1)}
   +\beta\bigl(\Cr{c1}-\log 2S -(2\rho_{1}+1)^{-1}\log\epsilon  \bigr)}
   {S   +\rho_{1}\ttau_{A}\beta}\right).
\end{align*}
We know that $u-\log u>0 $, and
$$
\log z\epsilon^{-2\kappa_{1}\rho_{1}/(2\rho_{1}+1)}>0 \text{ and}\quad
  \Cr{c1}-\log 2S -(1+2\rho_{1})^{-1}\log\epsilon\ge \rho_{1}\ttau_{A}
$$
by the definition of $z$ \eqref{E:choosez} and the upper bound on $S$ in this case, respectively,
which together imply that
$$
\frac{S -\log u +u +\log z\epsilon^{-2\beta\rho_{1}/(2\rho_{1}+1)}
   +\beta\bigl(\Cr{c1}-\log 2S -(2\rho_{1}+1)^{-1}\log\epsilon  \bigr)}
   {S   +\rho_{1}\ttau_{A}\beta}\ge 1.
$$
It follows that $\wt\Theta$ is increasing in $u$ on $(0,1)$ and
decreasing in $u$ on ${(1+1/\rho_{1},\infty)}$. Consequently,
\begin{equation} \label{E:BigTheta}
\Theta\le\wt{\wt\Theta}:= \beta\log 2dS/\mu_{*} - \log z+\frac{1}{\rho_{1}} - \left(2S/\rho_{1}   +2\ttau_{A}\beta
\right)^{-1} \biggl( S +\log z
   +\beta\Bigl(\Cr{c1}-\log 2S \epsilon  \Bigr)\biggr)^{2},
\end{equation}
where we use the assumption \eqref{E:epsbound23}, which implies that
$S+\log z+\beta\left(\Cr{c1}-\log 2S\epsilon \right)>0$.

The next step is to optimize with respect to $\beta$. We replace $\beta$ by $x=\beta-\kappa_{1}$
(which now takes values on $[0,\infty)$. Then in the notation of Lemma \ref{L:bounds} we set
\begin{align*}
A=\log\frac{2dS}{\mu_{*}},\quad
B= \log\epsilon^{-1}+\Cr{c1}-\log 2S,\quad C=S+\log z+\kappa_{1}B,\quad D=\frac{2}{\rho_{1}}S+2\ttau_{A}\kappa_{1}, \quad E=2\ttau_{A}.
\end{align*}
Observe that these are all positive, and by \eqref{E:epsboundC}
\begin{align*}
C&>S+\frac{\kappa_{1}}{2+4\rho_{1}} \log\epsilon^{-1},\\
 B&>\frac12\log\epsilon^{-1}.
\end{align*}

We then have
\begin{align*}
BD-CE&= \frac{2}{\rho_{1}} SB-2-\ttau_{A}S -2\ttau_{A}\log z\\
&=\frac{2}{\rho_{1}}S \left[  \log\epsilon^{-1}+\Cr{c1}-\log{2S}-\ttau_{A}\rho_{1}\right]
   +\frac{4\ttau_{A} \rho_{1}\kappa_{1}}{1+2\rho_{1}} \log\epsilon^{-1}\\
&>0 \text{ by \eqref{E:epsboundC}};\\
AD&< \left(\frac{2S}{\rho_{1}}+2\ttau_{A}\kappa_{1}\right) \left(\log \frac{2d}{\mu_{*}} +2\log \log\epsilon^{-1}\right)\\
&<S\cdot \frac{\log \epsilon^{-1}}{2}\\
&<BC.
\end{align*}
Thus \eqref{E:CEBD} holds, and we may conclude that 
$\wt{\wt\Theta}$ is decreasing in $x$ on $(0,\infty)$.
In particular, the maximum over the allowable range $[\kappa_{1},\infty)$ is attained at $\beta=\kappa_{1}$.

We now know that the maximum value of $\Theta$ is bounded by
\begin{equation} \label{E:BigTheta2}
\begin{split}
\sup \Theta&\le \sup_{S} \,\kappa_{1}\log S
   -\frac{\Bigl(S +\log z  +\kappa_{1}\Cr{c1} -\kappa_{1} \log 2S\epsilon  \Bigr)^{2}}{2S/\rho_{1}   +2\ttau_{A}\kappa_{1}}  - \log z+\Cr{c4}\\
 &\le 2\kappa_{1} \log \log \epsilon^{-1}
   -\inf_{S}\frac{\Bigl(S  +\Cr{c1} +\frac{\kappa_{1}}{1+2\rho_{1}} \log \epsilon^{-1} -\kappa_{1} \log^{2} \log \epsilon^{-1} \Bigr)^{2}}{2S/\rho_{1}   +2\ttau_{A}\kappa_{1}}  - \log z+\Cr{c4}\\
\end{split}
\end{equation}
where $\c{c4}$ is defined in \eqref{E:c5}.

By Lemma \ref{L:bounds} (applied now with $S$ in the role of $x$, and
$A=0$),
\begin{align*}
\sup \Theta &\le (2\kappa_{1}-\constcp)\log\log\epsilon^{-1} +
    \frac{2\rho_{1}\kappa_{1}}{1+2\rho_{1}} \log \epsilon^{-1}+\Cr{c4}\\
   &\hspace*{2cm} -2\rho_{1}\left(\kappa_{1}(\Cr{c1}-\log 2) +\frac{\kappa_{1}}{1+2\rho_{1}} \log \epsilon^{-1} -(2\kappa_{1}-\constcp) \log \log \epsilon^{-1}\right)-2\rho_{1}^{2}\ttau_{A}\kappa_{1} \\
   &= (2\rho_{1}+1)(2\kappa_{1}-\constcp)\log\log\epsilon^{-1}+\Cr{c5}.
\end{align*}
where $\c{c5}$ is defined in \eqref{E:c5}.
By \eqref{E:epsbound26} this is negative, completing this Case.\\

We conclude by filling in the details that connect $\Theta$ to the quantity we are trying to bound.
Putting both cases together, we see that there is a constant (expressible in terms only of
$\constcp,\rho_{1},\kappa_{1},\mu_{*},\ttau_{A},d$, positive $\epsilon_{0}$ and $r>0$ 
such that for any fixed $\epsilon\in (0,\epsilon_{0})$, and every
$$
z\ge \epsilon^{2\kappa_{1}\rho_{1}/(1+2\rho_{1})} (\log \epsilon^{-1})^{8\kappa_{1}+1}
$$
the supremum of $\Theta$ is less than $-r$. 

Holding $\epsilon\in (0,\epsilon_{0})$ fixed, write 
$$
R(U):=-T \sup_{S\ge \mu_{*}}  \sup_{\beta\ge\kappa_{1}} \sup_{u\ge U}
   \frac{\Theta(S,\beta,u)}{u}.
$$
By the above argument, we know that $R(U)> 0$ for any $U\ge z$.
According to \eqref{E:Plogbound}, we have
\begin{align*}
\lim_{T\to\infty}T^{-1}\log\P\bigl\{ \max_{\he\in\hE_{T;k,n,m}}\bigl( \he[\bX,\bA]+\log\#\hE_{T;k,n,m}\bigr))
   \ge zT\bigr\} 
&\le \sup_{S}\sup_{\hbeta}\sup_{u}\frac{z}{u} \Theta(S,\hbeta,u)\\
&= -z R(z),
\end{align*}
(since $\zeta=z/u$)
so we need to show that $R(z)>0$.
For any $U'\ge U\ge z$,
$$
R(U)\ge \min\left\{ \frac{r}{U'}, R(U') \right\}, \text{ and } 
   R(U')\ge \frac{U}{U'} R(U).
$$
Thus, we will be done if we can show that $R(U)>0$ for any $U$.

Take 
$U=\Cr{c8}\max\{2\Cr{c2},-\log z,\Cr{c8}\}$, where $\Cr{c8}$ is defined in \eqref{E:c8},
and also large enough that . 
Then for any $u\ge U$, and any $\hbeta\ge 0$ and any $S\ge\mu_{*}$, by \eqref{E:ThetaS}
$$
\Theta \le 
  S\left(\Cr{c2}\hbeta -\hbeta\log\hbeta \right)  +2\Cr{c2} -\log z+\log u
   -    2S^{-1}\Cr{c8}^{-1}\left(S(1 +2\hbeta\log\epsilon^{-1})+u\right)^{2},
$$
using the facts that 
\begin{align*}
u-\log u+\log z&>u/4,\\
\hbeta(\Cr{c1}+\log\hbeta\epsilon^{-1})&>-0.02+\frac{\hbeta}{2} \log\epsilon^{-1} \quad\text{by the first part of \eqref{E:c8}, and}\\
1+2S^{-1}\log S&> 0.27.
\end{align*}

Note that the bound \eqref{E:c8} on $\epsilon_{0}$ and the definition of $U$ imply that
for $u\ge U$ and $\epsilon<\epsilon_{0}$
\begin{align*}
\Cr{c2}\hbeta -\hbeta\log\hbeta - \Cr{c8}\left( 1 +8\hbeta \log\epsilon^{-1} \right)&<0,\\
u^{-1}\left(2\Cr{c2} -\log z+\log u\right)<4\Cr{c8}^{-1}.
\end{align*}
We then have for all $u\ge U$,
$$
\frac{\Theta}{u}\le -\Cr{c8}^{-1}\left( \frac{S}{u}+2\frac{u}{S}\right)\le -2\sqrt{2}\Cr{c8}^{-1}<0,
$$
so that $R(U)>0$, completing the proof.

\begin{lemma}  \label{L:bounds}
For real $A,B,C,D,E$, with $D,E>0$ and $B^{2}>AE$, 
the expression ${Ax-{(Bx+C)^{2}/(Ex+D)}}$ is concave over $x\ge -E/D$ and achieves its
maximum over $x$ in this range --- so, including, in particular, all positive $x$ --- at 
\begin{equation} \label{E:x0}
x_{0}=\frac{|CE-BD|-\tB D}{\tB E},
\end{equation}
where $\tB:=\sqrt{B^{2}-AE}$. 
The maximum value attained on this interval is
\begin{equation} \label{E:themaximum1}
\sup_{x\ge 0} Ax-\frac{(Bx+C)^{2}}{Ex+D}=
\frac{1}{E^{2}}\cdot\begin{cases}
-2(B+\tB)\left( CE-BD\right)-ADE&\text{if } CE\ge BD,\\
2(B-\tB)\left( BD-CE\right)-ADE&\text{if } CE<BD.
\end{cases}
\end{equation}

If
\begin{equation} \label{E:CEBD}
\frac{A}{B}<\frac{C}{D}<\frac{B}{E}
\end{equation}
then $x_{0}<0$, and the function is decreasing in $x$ for all $x>0$.3
%
%
\end{lemma}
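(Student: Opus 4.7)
The plan is to reduce the two-variable rational optimization to a very clean one-variable problem via the substitution $y := Ex + D$, on the domain $y > 0$ (i.e.\ $x > -D/E$). Since $x = (y-D)/E$ and $Bx + C = (By + K)/E$ with $K := CE - BD$, a direct expansion gives the identity
\begin{equation*}
Ax - \frac{(Bx+C)^2}{Ex+D} \;=\; -\frac{\tB^2}{E^2}\,y \;-\; \frac{K^2}{E^2 y} \;-\; \frac{2BK}{E^2} \;-\; \frac{AD}{E}.
\end{equation*}
This is the backbone of the whole proof. Concavity is then immediate on the stated domain: the constant and affine-in-$y$ parts are harmless, and $-K^2/(E^2 y)$ is concave on $y > 0$; since $y$ is an affine function of $x$, concavity in $x$ follows. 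Note that $\tB^2 = B^2 - AE > 0$ is used to make the $y$-coefficient strictly negative so that an interior maximum exists.

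Optimizing $g(y) := -\tB^2 y/E^2 - K^2/(E^2 y)$ over $y > 0$ by AM-GM (or by setting $g'(y)=0$) yields $y_0 = |K|/\tB$ and $g(y_0) = -2\tB|K|/E^2$. Inverting the substitution gives $x_0 = (|K| - \tB D)/(\tB E)$, which is \eqref{E:x0}. Plugging back in,
\begin{equation*}
\sup_{x > -D/E} \Bigl[ Ax - \tfrac{(Bx+C)^2}{Ex+D} \Bigr] \;=\; -\frac{2\tB|K|}{E^2} \;-\; \frac{2BK}{E^2} \;-\; \frac{AD}{E},
\end{equation*}
and splitting $|K|$ into its two cases $K \ge 0$ and $K < 0$ recovers the two expressions in \eqref{E:themaximum1} after clearing denominators. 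The bound on $\sup_{x \ge 0}$ coincides with the bound on $\sup_{x > -D/E}$ provided $x_0 \ge 0$; when $x_0 < 0$ the function is decreasing on $[0,\infty)$ by concavity, which sets up the final assertion.

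For the final claim, I would show $x_0 < 0$ directly from \eqref{E:x0}: this is equivalent to $|K| < \tB D$, i.e.\ $(BD - CE)^2 < (B^2 - AE)D^2$, which after expansion reduces to the quadratic inequality $E C^2 - 2BDC + AD^2 < 0$ in $C$. The roots are exactly $D(B \pm \tB)/E$, so the inequality holds iff
\begin{equation*}
\frac{D(B-\tB)}{E} \;<\; C \;<\; \frac{D(B+\tB)}{E}.
\end{equation*}
The upper bound is immediate from $C/D < B/E$ together with $\tB > 0$. The lower bound follows from $A/B < C/D$ once one verifies $AD/B \ge D(B-\tB)/E$, which rearranges to $B > \tB$, i.e.\ $AE > 0$ — this is satisfied in the application to Case \ref{nurange3} where all of $A,B,C,D,E$ are positive. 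Concavity plus $x_0 < 0$ then gives monotonic decrease for $x > 0$.

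The main obstacle is bookkeeping: keeping the signs of $K$, $\tB$, and the numerator $|CE - BD|$ straight through the case split in \eqref{E:themaximum1}, and handling the implicit sign assumptions on $A,B,C,D,E$ needed for the final monotonicity statement. The underlying mathematics, once the substitution $y = Ex + D$ is made, is essentially AM-GM applied to $\tB^2 y/E^2$ and $K^2/(E^2 y)$.
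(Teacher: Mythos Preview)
Your argument is correct and follows the paper's approach almost exactly: the paper makes the same substitution $y=Ex+D$, obtains the same expression $-\tB^{2}y/E^{2}-K^{2}/(E^{2}y)+\text{const}$ (with $K=CE-BD$), and reads off concavity, $y_{0}=|K|/\tB$, and the maximum value directly. The only divergence is in the final monotonicity claim: the paper gets $x_{0}<0$ in one line from the inequality $\tB\ge B-AE/B$ applied to $x_{0}=(BD-CE-\tB D)/(\tB E)$, whereas you pass through the equivalent quadratic $EC^{2}-2BDC+AD^{2}<0$ and locate $C$ between its roots. Both routes silently require $A\ge 0$ and $B>0$ (the paper's inequality $\tB\ge B-AE/B$ is exactly $B\ge\tB$, i.e.\ $AE\ge 0$); you are right to flag this, and it is indeed harmless in the application, where all five constants are positive.
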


\begin{proof}
Setting $y=Ex+D$, we may rewrite the function as
\begin{align*}
f(y)&=\frac{A}{E}y-\frac{AD}{E}- y^{-1}\left(\frac{B}{E}y+C-\frac{BD}{E}\right)^{2}\\
&=-\frac{\tB^{2}}{E^{2}}y-\left(\frac{BD}{E}-C\right)^{2}y^{-1}
   -\frac{2BC}{E}+\frac{2B^{2}D}{E^{2}}-\frac{AD}{E}
\end{align*}
This is a concave function that goes to $-\infty$ at $y=0$. If
$\tB^{2}\le 0$ then it is asymptotic to a line with positive or zero
slope as $y\to\infty$, so it is increasing for all $y>0$ (which includes
all $x>0$). If $\tB^{2}>0$ then it goes to $-\infty$ as $y\to\infty$.
The first two derivatives are
\begin{align*}
f'(y)&=- \frac{\tB^{2}}{E^{2}} +\left(\frac{BD}{E}-C\right)^{2}y^{-2},\\
f''(y)&=-2\left(\frac{BD}{E}-C\right)^{2}y^{-3},
\end{align*}
showing that the second derivative is negative for $y>0$, hence the function is concave. 
The first derivative is zero at $y_{0}:=|CE-BD|/\tB$, which is equivalent to \eqref{E:x0}.
The maximum value is
$$
f(y_{0})
   =-\frac{2\tB}{E^{2}}\left| CE-BD\right|
   -\frac{2B}{E^{2}}\left(CE-BD\right)-\frac{AD}{E},
$$
which simplifies directly to \eqref{E:themaximum1}

Suppose now that \eqref{E:CEBD} holds. Since $\tB\ge B-\frac{AE}{B}$, $CE-BD<0$ implies
$$
x_{0}=\frac{BD-CE-\tB D}{\tB E}\le \frac{AD-BC}{B\tB}<0.
$$
\end{proof}

%

\begin{lemma} \label{L:bounds3}
Let $Z$ be a random variable with tail bound 
$$
\P\bigl\{ Z>z\bigr\} \ge \e^{-(z_{0}+z)^{2}/\tau_{Z}} \text{ for all }z>0.
$$
Then 
$$
\E[Z_{+}]\ge \e^{-z_{0}^{2}/\tau_{Z}}\cdot \e^{-\tau_{Z}/4z_{0}^{2}}\cdot \frac{\tau_{Z}}{2ez_{0}}.
$$
\end{lemma}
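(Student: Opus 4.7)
The plan is to bound $\E[Z_+]$ from below using the standard tail-integration identity and then apply a carefully chosen truncation.

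First, by Fubini, $\E[Z_{+}]=\izf \P\{Z>z\}\,dz$, so the hypothesis gives immediately
$$
\E[Z_{+}] \;\ge\; \izf \e^{-(z_{0}+z)^{2}/\tau_{Z}}\,dz.
$$
Expanding $(z_{0}+z)^{2}=z_{0}^{2}+2z_{0}z+z^{2}$ pulls out the desired factor $\e^{-z_{0}^{2}/\tau_{Z}}$, and reduces matters to producing a lower bound for the residual Gaussian-type integral $\izf \e^{-(2z_{0}z+z^{2})/\tau_{Z}}\,dz$.

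For that integral the natural strategy is to truncate at a carefully chosen $a>0$ and use the crude bound $z^{2}\le a^{2}$ on $[0,a]$. This reduces the integrand to $\e^{-a^{2}/\tau_{Z}}\cdot \e^{-2z_{0}z/\tau_{Z}}$, and the exponential integrates explicitly to $\frac{\tau_{Z}}{2z_{0}}(1-\e^{-2z_{0}a/\tau_{Z}})$. The only free parameter is $a$, and the choice $a=\tau_{Z}/(2z_{0})$ simultaneously produces the factor $\e^{-\tau_{Z}/(4z_{0}^{2})}$ and makes the remaining bracketed quantity equal to $1-\e^{-1}$.

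Assembling these pieces gives
$$
\E[Z_{+}] \;\ge\; \e^{-z_{0}^{2}/\tau_{Z}}\cdot \e^{-\tau_{Z}/(4z_{0}^{2})}\cdot \frac{\tau_{Z}(1-\e^{-1})}{2z_{0}},
$$
which implies the claim because $1-\e^{-1}>\e^{-1}$. There is no genuine obstacle: the elementary step of replacing $z^{2}$ by its maximum on $[0,a]$ loses exactly a factor $\e^{-\tau_{Z}/(4z_{0}^{2})}$, which is the same factor that appears in the stated bound, so the truncation loss is of the correct order; the looseness in the constant $(1-\e^{-1})/2$ versus $1/(2e)$ shows that the choice of $a$ need only be approximately optimal.
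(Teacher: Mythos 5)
Your proof is correct and follows essentially the same route as the paper: integrate the tail bound, factor out $\e^{-z_{0}^{2}/\tau_{Z}}$, and truncate at the same point $a=\tau_{Z}/2z_{0}$. The only difference is cosmetic --- the paper lower-bounds the truncated integral by the rectangle $z_{*}\e^{-z_{*}(2z_{0}+z_{*})/\tau_{Z}}$, whereas you integrate the lower-bounding exponential exactly, which yields the marginally better constant $1-\e^{-1}$ in place of $\e^{-1}$ and hence implies the stated bound.
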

\begin{proof}
\begin{align*}
\E[Z_{+}]&=\int_{0}^{\infty} \P\bigl\{ Z>z\bigr\} dz\\
&\ge \e^{-z_{0}^{2}/\tau_{Z}} \int_{0}^{\infty} \e^{-z(2z_{0}+z)/\tau_{Z}} dz\\
&\ge z_{*} \e^{-z_{0}^{2}/\tau_{Z}}  \e^{-z_{*}(2z_{0}+z_{*})/\tau_{Z}}
\end{align*}
for any fixed $z_{*}>0$, since the integrand is decreasing. Taking $z_{*}=\tau_{Z}/2z_{0}$ gives us the result.
\end{proof}

\section{Equal rates}
The equal rates case is, on the one hand, easier, because the average ``cost'' of an excursion is zero,
hence there is no need to keep track of the time spent on an excursion.
On the other hand, this also makes the problem more challenging, because there is a much larger
class of paths that need to be considered. When the average rates are unequal, only
paths that spend nearly all their time at 0 can even plausibly contribute to the
upper bound. With equal rates, all paths --- or, at least, all paths that spend nearly
all of their time at the sites with maximal rates --- are essentially on an equal footing.
This makes the combinatorial bounds more challenging.

We will assume throughout this section that $\Delta=0$. The upper bound
clearly can only be improved by making some elements of $\Delta$
positive. The lower bound can only be reduced by $\epsilon \max\Delta_{j}$,
which will be much smaller than any multiple of $(\log\epsilon^{-1})^{-1}$
for $\epsilon$ sufficiently small.

\subsection{Trajectories} \label{sec:Trajectories}
Instead of considering excursions, which emphasize time spent at 0 as the
baseline, we will base our analysis on trajectories, which will simply be
paths in the migration graph $\M$. The set of all trajectories
of length $T$ will be denoted $\F_{T}$. The set of changepoints of a trajectory $f$ will be denoted
$$
K(f):=\{t\, :\, f_{t}\ne f_{t+1}\}.
$$
We write $\F_{T,k}$ for the set of trajectories with exactly $k$ changepoints,
and we have $\binom{T}{k}\le\#\F_{T,k}\le d^{k} \binom{T}{k}$.
We endow $\F_{n}$ with the $L^{2}$ norm $\|\cdot\|_{2}$, defined to
be the square root of the Hamming distance (the number of times at which the trajectories
are not equal). The {\em null trajectory}
$f^{(0)}$ will denote the path that stays at 0 for all $T$ steps.

\nc{\ff}{\mathbf{f}}
We then have the random variables
$$
Z_{f}:=f[\bX,\bA]:=\sum_{t\in K(f)} \log A_{t}(f_{t+1},f_{t})
   +\sum_{t\in \{0,\dots,T-1\}\setminus K(f)} \tX_{t}^{(f_{t})}.
$$
(We will use the $Z_{f}$ notation for brevity when there is no need
to emphasize the dependence on $\bX$ and $\bA$.)

We have the analogue of Lemma \ref{L:allterms}:
\begin{lemma} \label{L:allterms2}
\begin{equation} \label{E:allterms2}
\log R_{T}(0,0)=\sum_{t=1}^{T}X_{t}^{(0)}
  +\log \Bigl(1+  \sum_{f\in \F_{T}\setminus\{f^{(0)}\}} \e^{ f[\bX,\bA] }\epsilon^{K(f)}\Bigr),
\end{equation}
where $f^{(0)}$ is the null trajectory. Thus
\begin{equation} \label{E:allterms3}
\begin{split}
\liminf_{T\to\infty} T^{-1}\max_{f\in \F_{T}} f[\bX,\bA] -K(f)&\log\epsilon^{-1}
\le a(\epsilon)-a(0)\\
&\le  \limsup_{T\to\infty}T^{-1} \Bigl( \log  \#\F_{T} +\max_{f\in\F_{T}} f[\bX,\bA ]  -K(f)\log\epsilon^{-1}\Bigr)
\end{split}
\end{equation}
\end{lemma}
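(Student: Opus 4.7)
The plan is to proceed by direct analogy with Lemma \ref{L:allterms}, replacing the excursion-sequence decomposition with a trajectory decomposition, which is considerably simpler because we no longer need to group the time axis into excursion intervals separated by stays at~$0$.

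First, I would write out the $(0,0)$ entry of the product as a sum over paths:
$$
R_T(0,0) = \sum_{(x_0,\ldots,x_T)} \prod_{t=1}^{T} D_t^{*}(x_t,x_{t-1}),
$$
with the sum restricted to sequences satisfying $x_0 = x_T = 0$ and making only transitions that are edges of $\M$ (others give a product of $0$). These admissible sequences are precisely the trajectories in $\F_T$ that start and end at $0$. Since $\Delta \equiv 0$ throughout this section, we have $D_t^{*}(i,i) = e^{X_t^{(i)}}$ on the diagonal and $D_t^{*}(j,i) = \epsilon A_t(j,i)$ off the diagonal. Hence the product corresponding to a trajectory $f$ factors as
$$
\epsilon^{|K(f)|} \prod_{t\in K(f)} A_t(f_{t+1},f_t) \prod_{t\notin K(f)} e^{X_t^{(f_t)}}.
$$
Dividing each term by $\prod_t e^{X_t^{(0)}}$ turns the non-change factors $e^{X_t^{(f_t)}}$ into $e^{\tX_t^{(f_t)}}$ and leaves the change factors to become $\log A_t(f_{t+1},f_t)$ inside the exponent. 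The null trajectory has $K(f^{(0)}) = \emptyset$ and $\tX_t^{(0)} \equiv 0$, so it contributes exactly $1$; all other trajectories contribute $\epsilon^{|K(f)|} e^{f[\bX,\bA]}$, which proves \eqref{E:allterms2} upon taking logarithms.

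For the bounds \eqref{E:allterms3}, I would use the elementary two-sided estimate, valid for any nonnegative reals $a_f$,
$$
\max_f a_f \le 1 + \sum_f a_f \le (\#\F_T + 1)\bigl(1 \vee \max_f a_f\bigr),
$$
applied to $a_f = \epsilon^{|K(f)|} e^{f[\bX,\bA]}$. After subtracting $\sum_{t=1}^T X_t^{(0)}$, dividing by $T$, and using the Strong Law of Large Numbers to identify $T^{-1}\sum_t X_t^{(0)} \to \mu_0 = a(0)$ (together with $T^{-1}\log R_T(0,0) \to a(\epsilon)$), both the liminf and limsup bounds drop out, upon rewriting $\epsilon^{|K(f)|} = e^{-|K(f)|\log \epsilon^{-1}}$. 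The extra $\log(\#\F_T + 1)$ absorbed into the limsup is precisely the combinatorial penalty that appears in the upper bound.

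The proof is essentially bookkeeping, so there is no real mathematical obstacle; the only care needed is to verify that the definition of $f[\bX,\bA]$ correctly collects exactly those contributions that survive dividing by $\prod_t e^{X_t^{(0)}}$, and to check that the null trajectory term gives the leading $1$ in the bracket. The resulting decomposition is structurally identical to \eqref{E:firstLB}--\eqref{E:secondUB}, so no new techniques beyond those of Lemma \ref{L:allterms} are required.
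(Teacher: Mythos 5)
Your proof follows exactly the route the paper intends: the paper gives no separate argument for this lemma, simply declaring it the analogue of Lemma \ref{L:allterms}, and your path-by-path expansion of $R_T(0,0)$, division by $\prod_{t}\e^{X_t^{(0)}}$, the observation that the null trajectory contributes the $1$, the elementary sandwich $\max_f a_f\le\sum_f a_f\le \#\F_T\,\max_f a_f$, and the passage to the limit via the SLLN and the almost-sure convergence $T^{-1}\log R_T(0,0)\to a(\epsilon)$ are precisely that adaptation. One bookkeeping point deserves care, and it is the one substantive check in this lemma: dividing the changepoint factor $\epsilon A_t(f_{t+1},f_t)$ by $\e^{X_t^{(0)}}$ leaves $\log A_t(f_{t+1},f_t)-X_t^{(0)}$ in the exponent, not $\log A_t(f_{t+1},f_t)$ as you assert; this matches the excursion weights $\alpha_t$ in \eqref{E:alphas} and the way $Z(t_1,t_2)$ is assembled in section \ref{sec:sameLB}, so either $f[\bX,\bA]$ must be read as including the $-X_t^{(0)}$ terms at changepoints, or the identity you display (and the lemma as literally written) is off by $\sum_{t\in K(f)}X_t^{(0)}$ for each $f$. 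With that reading made explicit, the rest of your argument goes through verbatim and no new ideas beyond Lemma \ref{L:allterms} are needed.
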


\subsection{Proof of the lower bound}  \label{sec:sameLB}
We begin by assuming $X_{i}^{(0)}$ bounded above by a real number $R$.
We may find paths $0=j_{0},j_{1},\dots,j_{I}=1$
and $1=j'_{0},j'_{1},\dots,j'_{I'}=0$ in $\M$, with $I+I'=\kappa_{1}$. 

Now consider the random variables
\begin{align*}
Z(t_{1},t_{2}):= \max\Bigl\{ \sum_{t=s_{1}+I}^{s_{2}-I'}\tX^{(1)}_{t}
&+\sum_{i=0}^{I-1}\log A_{s_{1}+i} (j_{i+1},j_{i})-X_{s_{1}+i}^{(0)} \\
  &\quad+ \sum_{i=1}^{I'} \log A_{s_{2}-I'+i}(j'_{i+1},j'_{i})-X^{(0)}_{s_{2}-I'+i}\, :\\
&\qquad \qquad\qquad t_{1}\le s_{1}\le s_{2}\le t_{2},\quad s_{2}\ge s_{1}+\kappa_{1}-1\Bigr\}.
\end{align*}
Note that for any fixed $\ell\ge 1$, the collection of random variables 
$$
\bigl\{ Z(\ell j,\ell (j+1)-1)\, :\,j=0,1,\dots, k-1 \bigr\}
$$
are i.i.d., and
\begin{equation}  \label{E:maxZf}
\max\bigl\{ Z_{f}\, :\, f\in \F_{\ell k,\kappa_{1}k} \bigr\} \ge \\
\sum_{j=0}^{k-1} Z\bigl( \ell j,\ell(j+1)-1\bigr) 
\end{equation}

The distribution of $Z(t_{1},t_{2})$
depends only on $t_{2}-t_{1}$. This allows us to define
\begin{equation} \label{E:Gammal}
\Gamma_{\ell}:=\ell^{-1/2}\E\bigl[ Z(\ell_{1},\ell_{1}+\ell-1)\bigr].
\end{equation}
It follows from \eqref{E:maxZf} and the Strong Law of Large Numbers that for fixed $\ell$,
\begin{equation} \label{E:maxZf2}
\lim_{k\to\infty}k^{-1}\max\bigl\{ Z_{f}\, :\, f\in \F_{\ell k,\kappa_{1}k} \bigr\} \ge  \ell^{1/2} \cdot \Gamma_{\ell}
\end{equation}
for any $\ell$. We now fix $\ell=\lceil2\kappa_{1}\log\epsilon^{-1}/\Gamma_{*}\rceil^{2}$, where
$\Gamma_{*}$ is defined in Lemma \ref{L:Cell}.

For any $T\ge\ell$ we then have
\begin{align*}
T^{-1}\max\bigl\{ Z_{f}-K(f) \log\epsilon^{-1}\, :\, &f\in \F_{T} \bigr\} \ge T^{-1}\max\bigl\{ Z_{f}-K(f)\log\epsilon^{-1}\, :\, f\in \F_{k\ell} \bigr\}\\
&\ge T^{-1}\max\bigl\{ Z_{f}\, :\, f\in \F_{k\ell,\kappa_{1}k} \bigr\} -T^{-1}\kappa_{1}k \log\epsilon^{-1}, \\
&\qquad\text{ since }
  K(f)=\kappa_{1}k \text{ for any }f\in \F_{\ell k,\kappa_{1}k}\\
&\ge  \left(1-\frac{\ell}{T}\right) \ell^{-1} k^{-1} \max\bigl\{ Z_{f}\, :\, f\in \F_{k\ell,\kappa_{1}k} \bigr\} -\ell^{-1}\kappa_{1} \log\epsilon^{-1}
\end{align*}
where $k=\lfloor T/\ell\rfloor$.

Hence, by Lemma \ref{L:Cell},
\begin{equation} \label{E:maxZf3}
\begin{split}
\lim_{T\to\infty}T^{-1}\max\bigl\{ Z_{f}-K(f)\log\epsilon^{-1}\, :\, f\in \F_{T} \bigr\} &\ge 
   \ell^{-1/2}\Gamma_{*} -\ell^{-1}\kappa_{1} \log \epsilon^{-1}\\
   &\ge \frac{\Gamma_{*}^{2}}{4\kappa_{1}\log\epsilon^{-1}+2\Gamma_{*}} .
 \end{split}
\end{equation}
Thus, for any $\delta>0$,
$$
a(\epsilon)-a(0)\ge \left(\frac{\Gamma_{*}^{2}}{4\kappa_{1}} -\delta\right) \frac{1}{\log\epsilon^{-1}}
$$
for positive $\epsilon$ sufficiently small.

\begin{lemma}  \label{L:Cell}
$$
\liminf_{\ell\to\infty} \Gamma_{\ell} =\Gamma_{*}\ge \sqrt{8/\pi}\log 2\cdot \sigma > 0,
$$
where $\sigma^{2}=\operatorname{Var}(\tX_{t}^{(1)})$. In particular, for $\ell_{0}$ sufficiently
large, $\inf_{\ell\ge \ell_{0}} \Gamma_{\ell}>0$.
\end{lemma}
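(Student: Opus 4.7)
The plan is to reduce $\Gamma_\ell$ to the expected maximum drawup of a centered random walk of length $\ell$, then pass to the Brownian limit via Donsker's invariance principle and evaluate using L\'evy's reflection equivalence.

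\emph{Reduction.} Let $S_n:=\sum_{t=1}^n \tX^{(1)}_t$; this is a centered random walk, since $\tmu^{(1)}=\mu_1-\mu_0=0$ in the equal-rates setting. Specializing the definition to $t_1=0$, $t_2=\ell-1$, and choosing the admissible pair $(s_1^*,s_2^*)$ that maximizes the drawup $S_{s_2-I'}-S_{s_1+I-1}$ alone, one reads from the defining formula
\begin{equation*}
Z(0,\ell-1)\ge D_{\ell-\kappa_1}+\Upsilon(s_1^*,s_2^*),
\end{equation*}
where $D_n:=\max_{0\le u\le v\le n}(S_v-S_u)$ is the maximum drawup and $\Upsilon$ is the sum of $\kappa_1=I+I'$ boundary terms of the form $\log A_\tau(\cdot,\cdot)-X^{(0)}_\tau$. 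Using the standing hypothesis $X^{(0)}_\tau\le R$ together with the essential lower bound on $\E[\log A_\tau(j,i)\cond D_\tau]$ built into the definition of $\M$, each boundary term is bounded below almost surely, so $\E[\Upsilon(s_1^*,s_2^*)]\ge -C_0$ for some $C_0=C_0(R,\kappa_1,\M)$, regardless of how $(s_1^*,s_2^*)$ depends on the walk. Hence $\E[Z(0,\ell-1)]\ge\E[D_{\ell-\kappa_1}]-C_0$.

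\emph{Limit and uniform integrability.} Donsker's theorem gives $\ell^{-1/2}S_{\lfloor\ell t\rfloor}\Rightarrow \sigma B_t$ on $C[0,1]$, and since the max-drawup functional $f\mapsto\sup_{0\le s\le t\le 1}(f(t)-f(s))$ is continuous in the uniform topology, the continuous mapping theorem yields
\begin{equation*}
\ell^{-1/2}D_{\ell-\kappa_1}\;\Longrightarrow\;\sigma W_*,\qquad W_*:=\sup_{0\le s\le t\le 1}(B_t-B_s).
\end{equation*}
Doob's $L^2$ maximal inequality gives $\E\bigl[(\ell^{-1/2}\sup_{n\le\ell}|S_n|)^2\bigr]\le 4\sigma^2$, and since $D_\ell\le 2\sup_{n\le\ell}|S_n|$, the family $\{\ell^{-1/2}D_\ell\}$ is uniformly integrable. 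Convergence of means follows, so $\ell^{-1/2}\E[D_{\ell-\kappa_1}]\to\sigma\E[W_*]$, while $C_0/\sqrt{\ell}\to 0$. Together, $\Gamma_*\ge\sigma\,\E[W_*]$.

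\emph{Evaluation of the Brownian expectation.} By L\'evy's theorem, the processes $(B_t-\min_{s\le t}B_s)_{t\ge 0}$ and $(|B_t|)_{t\ge 0}$ have the same law, so $W_*=\sup_{0\le t\le 1}(B_t-\min_{s\le t}B_s)\stackrel{d}{=}\sup_{0\le t\le 1}|B_t|$. The reflection-principle expansion, integrated term by term, gives
\begin{equation*}
\E\!\left[\sup_{0\le t\le 1}|B_t|\right]=\frac{4}{\sqrt{2\pi}}\sum_{k=0}^{\infty}\frac{(-1)^k}{2k+1}=\sqrt{\pi/2},
\end{equation*}
so $\Gamma_*\ge\sigma\sqrt{\pi/2}$; since $\sqrt{\pi/2}\approx 1.253$ exceeds $\sqrt{8/\pi}\log 2\approx 1.106$, the stated bound holds with room to spare, and in particular $\Gamma_*>0$.

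\emph{Main obstacle.} The most delicate step is the uniform integrability used to pass from weak convergence of $\ell^{-1/2}D_\ell$ to convergence of expectations; this is where finite variance is actually exploited (otherwise the limiting functional, though integrable, need not control the prelimit means). A secondary subtlety is the measurability coupling between $(s_1^*,s_2^*)$ and the boundary random variables $\Upsilon(s_1^*,s_2^*)$: one cannot use $\E[\Upsilon]=\E[\Upsilon]$ under an unjustified independence assumption, and it is precisely the a.s.\ truncation $X^{(0)}\le R$ assumed at the outset of the lower-bound proof that permits a pathwise lower bound to do the work of an independence hypothesis.
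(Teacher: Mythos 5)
Your overall route is the paper's: bound $Z(0,\ell-1)$ from below by the maximum drawup of the centered walk of $\tX^{(1)}$-increments plus the $\kappa_{1}$ boundary terms, pass to Brownian motion by Donsker, and evaluate the limiting functional. Your evaluation of that functional via L\'evy's identity (giving $\sigma\sqrt{\pi/2}$, which exceeds the required $\sqrt{8/\pi}\log 2\cdot\sigma$) is a correct, self-contained alternative to the paper's citation of Douady--Shiryaev--Yor, and your explicit uniform-integrability argument via Doob's $L^{2}$ inequality is a welcome justification of the convergence of means, which the paper passes over quickly.

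However, there is a genuine gap in how you dispose of the boundary terms. You assert that ``each boundary term is bounded below almost surely,'' but the hypotheses give no pathwise lower bound on $\log A_{\tau}(j,i)$: the definition of the migration graph only requires $\operatorname{ess}\inf \E[\log A_{t}(j,i)\cond D_{t}]>-\infty$, so $\log A_{\tau}(j,i)$ may be arbitrarily negative with positive probability, and the truncation $X^{(0)}\le R$ controls only the $-X^{(0)}$ part of $\Upsilon$. Because the times $(s_{1}^{*},s_{2}^{*})$ are random, you also cannot take expectations of the $\log A$ terms as if they were evaluated at fixed times; the fix (and the paper's argument) is to observe that $(s_{1}^{*},s_{2}^{*})$ are measurable with respect to $\eu{D}=\sigma(D_{t},\,t\ge 1)$, condition on $\eu{D}$, and bound $\E[\Upsilon\cond\eu{D}]$ below by the essential infimum $A_{*}$ of the conditional expectations, which is finite precisely by the definition of $\M$ --- this is the role the $\operatorname{ess}\inf$ hypothesis is designed to play. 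Separately, the lemma as stated assumes no almost-sure bound on $X^{(0)}_{t}$ (only sub-Gaussian tails), whereas your proof keeps $X^{(0)}\le R$ as a standing hypothesis throughout; the paper removes it by truncating at $R=\log\ell$ and showing, using the tail bound with variance factor $\tau^{(0)}$, that the induced changes in $\E[Z(0,\ell-1)]$ and in $\sigma$ are $o(\sqrt{\ell})$ and $o(1)$ respectively, so this de-truncation step still needs to be supplied.
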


\begin{proof}
We consider, for definiteness, $Z(0,\ell-1)$. We suppose first that $X_{t}^{(0)}$ is almost-surely
bounded by some number $R$.

By the definition of $\M$ and the fact that $(A_{t})$ are i.i.d., 
$$
A_{*}:=\operatorname{ess}\inf\E\Bigl[\sum_{i=0}^{I-1}\log A_{i} (j_{i+1},j_{i}) \cond \eu{D} \Bigr] +
\operatorname{ess}\inf\E\Bigl[\sum_{i=0}^{I'-1}\log A_{i} (j'_{i+1},j'_{i}) \cond \eu{D} \Bigr]  >
-\infty.
$$
That is, the expected value of either sum is at least $A_{*}$, conditioned on any realization of
$(D_{t})_{t=1}^{\infty}$. 

Define for $0\le t<1$
$$
W_{t}:= \frac{1}{\sqrt{\ell}\sigma} \sum_{i=1}^{\lfloor t\cdot \ell \rfloor}
  \tX_{i}^{(1)}\, .
$$
Then 
\begin{align*}
Z(0,\ell-1)&=\hspace*{-5mm}\max_{\frac{I}{\ell}\le t\le t'\le 1-\frac{I}{\ell}}\biggl(\sqrt{\ell}\sigma (W_{t'}-W_{t})
   +\sum_{i=0}^{I-1}\log A_{\lfloor t\ell \rfloor +1+i} (j_{i+1},j_{i})-X_{\lfloor t\ell \rfloor +1+i}^{(0)} \\
   &\hspace*{6.5cm}+ \sum_{i=1}^{I'}\log A_{\lfloor t'\ell \rfloor-1-I'+i}(j'_{i},j'_{i-1})-X^{(0)}_{\lfloor t'\ell \rfloor-1-I'+i}\biggr)\\
&\ge \max_{\frac{I}{\ell}\le t\le t'\le 1-\frac{I}{\ell}}\Bigl(\sqrt{\ell}\sigma (W_{t'}-W_{t})\Bigr) 
   -\sum_{i=0}^{I-1}X_{\tilde{t}+i}^{(0)}    - \sum_{i=1}^{I'}X^{(0)}_{\tilde{t}'-I'+i}\\
&\hspace*{4cm}   +\sum_{i=0}^{I-1}\log A_{\tilde{t}+i} (j_{i+1},j_{i})\,
   + \,\sum_{i=1}^{I'}\log A_{\tilde{t}'-I'+i}(j'_{i},j'_{i-1})\Bigr)
\end{align*}
where we define random variables
$\tilde{t}'=\lfloor t'\ell \rfloor-1$ and $\tilde{t}=\lfloor t\ell \rfloor +1$, where
$t'$ and $t$ are the locations at which $W_{t'}-W_{t}$ is a maximum.
We then have
\begin{equation} \label{E:Zbound}
\begin{split}
\E\bigl[ Z(0,\ell-1)\bigr]&\ge \E\left[\max_{\frac{I}{\ell}\le t\le t'\le 1-\frac{I}{\ell}}\Bigl(\sqrt{\ell}\sigma (W_{t'}-W_{t})\Bigr) \right] -\kappa_{1}R\\
&\qquad   +\operatorname{ess}\inf\E\biggl[\sum_{i=0}^{I-1}\log A_{t+i} (j_{i+1},j_{i})
   \, \biggm| \, \eu{D} \biggr]\,
   +\, \operatorname{ess}\inf\E\biggl[\sum_{i=0}^{I'-1}\log A_{t'+i} (j_{i+1},j_{i})
   \, \biggm| \, \eu{D} \biggr]\\
&\ge \E\left[\max_{\frac{I}{\ell}\le t\le t'\le 1-\frac{I}{\ell}}\Bigl(\sqrt{\ell}\sigma (W_{t'}-W_{t})\Bigr) \right]+A_{*}-\kappa_{1}R.
\end{split}
\end{equation}
Note that the essential infimum is required in the second line because $t$ and $t'$ depend on
$\eu{D}$.

By Donsker's invariance principle ({\em cf.} Theorem 2.4.4 of \cite{EKM97})
$(W_{t})_{0\le t\le 1}$ converges weakly (in supremum) to a Brownian motion 
$(\omega_{t})_{0\le t\le 1}$, so that
$$
\E\left[\max_{\frac{I}{\ell}\le t\le t'\le 1-\frac{I}{\ell}}\Bigl(\sigma (W_{t'}-W_{t})\Bigr) \right]
\convinfty{\ell} \sigma\E\left[\max_{0\le t\le t'\le 1}\omega_{t'}-\omega_{t}\right].
$$
The maximum on the right-hand side is sometimes referred to as a ``descent variable'' of the
Brownian motion.\footnote{In \cite{DSY00} it is referred to as the ``downfall'',
but this is an awkward translation from the Russian. In mathematical finance it is referred to as ``maximum drawdown'' \cite{MAPA04}.} By Theorem 2 of \cite{DSY00} it follows that
\begin{equation} \label{E:drawdown}
\lim_{\ell\to\infty} \E\left[\max_{\frac{I}{\ell}\le t\le t'\le 1-\frac{I}{\ell}}\Bigl(\sigma (W_{t'}-W_{t})\Bigr) \right] \ge \sigma\E\bigl[ \max_{0\le t \le 1} (\max_{t\le t'\le 1}\omega_{t'}-\omega_{t})\bigr]=  \sqrt{8/\pi}\log 2 \cdot \sigma.
\end{equation}

Suppose now we do not necessarily have a bound on $X_{t}^{(0)}$.
For any $R>1$,
\begin{align*}
\E\bigl[ (X_{t}^{(0)}-R)\indic_{\{X_{t}^{(0)}\ge R\}} \bigr]& \le \frac{2\tau^{(0)}}{R}e^{-R^{2}/2\tau^{(0)}}\\
\E\bigl[ (X_{t}^{(0)2}-R^{2})\indic_{\{X_{t}^{(0)}\ge R\}}\bigr]& \le 2\tau^{(0)}e^{-R^{2}/2\tau^{(0)}}.
\end{align*}
We now create new random variables $Z^{R}(0,\ell-1)$ 
by replacing every appearance of $X_{t}^{(0)}$ in the definition of $Z(0,\ell-1)$ by
$$
X_{t,R}:={R\wedge X_{t}^{(0)}+\E\bigl[ (X_{t}^{(0)}-R)\indic_{\{X_{t}^{(0)}\ge R\}}\bigr]},
$$
where $R$ may now depend on $\ell$. (This changes $\tX_{t}^{(1)}$ as well as $X_{t}^{(0)}$.) 
We still have $\E[\wt{X}_{t}^{(1)}]=0$, and
Thus the variance
$\sigma_{R}^{2}$ of the new $\tX_{t,R}$ differs from $\sigma^{2}$ by no more than
\begin{align*}
\bigl|\E[(X_{t}^{(0)}&)^{2}]-\E[X_{t,R}^{2}]\bigr|=
  \bigl|\E[(X_{t}^{(0)})^{2}-X_{t,R}^{2}]\bigr| \\
  &\le \E\left[\bigl((X_{t}^{(0)})^{2}-R^{2}\bigr)\indic_{\{X_{t}^{(0)}\ge R\}}\right]
  +2R\E\left[ (X_{t}^{(0)}-R)\indic_{\{X_{t}^{(0)}\ge R\}}\right]
  +\E\left[ (X_{t}^{(0)}-R)\indic_{\{X_{t}^{(0)}\ge R\}}\right]^{2} \\
  &\le \bigl(6\tau^{(0)}+4(\tau^{(0)})^{2}\bigr) e^{-R^{2}/2\tau^{(0)}}
\end{align*}

We see that
$$
\left| Z(0,\ell-1)-Z^{R}(0,\ell-1)\right|\le
  \sum_{t=0}^{\ell-1} \left| X_{t}^{(0)}
  -\left( R\wedge X_{t}^{(0)}+\E\bigl[ (X_{t}^{(0)}-R)\indic_{\{X_{t}^{(0)}\ge R\}}\right) \right|
$$
so that
$$
\bigl|\E\left[Z(0,\ell-1)\right]-\E\left[Z^{R}(0,\ell-1)\right]\bigr| \le \E\left[\left| Z(0,\ell-1)-Z^{R}(0,\ell-1)\right|\right]
  \le \ell \frac{4\tau^{(0)}}{R}e^{-R^{2}/2\tau^{(0)}}. 
$$

Take $R=\log\ell$.
Applying \eqref{E:Zbound} to $Z^{R}(0,\ell-1)$, we may conclude that
\begin{align*}
\liminf_{\ell\to\infty}\ell^{-1/2}\E\bigl[ Z(0,\ell-1)\bigr] &\ge \liminf_{\ell\to\infty}\sigma_{R}\sqrt{8/\pi}\log 2 -
  \limsup_{\ell\to\infty}\ell^{1/2} \frac{4\tau^{(0)}}{R}e^{-R^{2}/2\tau^{(0)}}\\
  &=\sigma \sqrt{8/\pi} \log 2.
\end{align*}
\end{proof}

\subsection{Proof of the upper bound}  \label{sec:equalupper}
We replace $A_{t}(i,j)$ by $\max_{i',j'}A_{t}(i',j')\vee 1$ for all $i\ne j$. 
This can only increase the value of $a(\epsilon)$,
so it suffices to prove the upper bound under this new condition. Similarly,
the upper bound will only be increased if we add $\tmu^{(j)}$ to each
$X_{t}^{(j)}$, so it will suffice to prove the upper bound under the assumption
that all $\tmu^{(j)}$ are 0. We also write 
$$
C_{0}:=\left(5\max_{j} \ttau^{(j)}\right)^{1/2}.
$$
Note that each $Z_{f}$ is sub-Gaussian, and for any $f,f'\in\F_{T}$
$$
Z_{f}-Z_{f'}=\sum_{i=0}^{n-1}  (\tX^{(f_{i})}-\tX^{(f'_{i})})-(K(f)-K(f'))\log\epsilon^{-1}.
$$
The first term is a sum of independent sub-Gaussian random variables with variance factor no
bigger than $2\max_{j} \ttau^{(j)}$, and the second term is deterministic,
so that by Lemma \ref{L:Orlicztau}
$$
\|Z_{f}-Z_{f'}\|_{\Psi} \le d(f,f'):=C_{0} \|f-f'\|_{2}.
$$

We now fix an increasing sequence of integers $1=m_{0}<m_{1}<\cdots<m_{J}<m_{J+1}=T$,
to be determined later, where we assume that $m_{J}=\lfloor T/2\rfloor$.
We define for $J\ge j\ge 0$,
\begin{equation}  \label{E:alphadef}
Z_{*}^{j}:=\max\{Z_{f}\, : \, f\in \bigcup_{m_{j}\le k< m_{j+1}} \F_{T,k} \}.
\end{equation}
We may then use \eqref{E:allterms2} to obtain
\begin{equation}  \label{E:Ynbound1}
a(\epsilon)-a(0) \le 
  \limsup_{T\to\infty} T^{-1}\log \Bigl( 1+\sum_{j=0}^{J-1} \epsilon^{m_{j}} (m_{j+1}-m_{j})\binom{T}{m_{j+1}} \e^{Z^*_{j}}  +T\epsilon^{m_{J}} \binom{T}{m_{J}} \e^{Z^*_{J}} \Bigr) .
\end{equation}
Note that, in order to obtain an upper bound, we need to take a {\em smaller} power of $\epsilon$ than any
in the range $[m_{j},m_{j+1})$ that are combined into one term.

To bound the Orlicz norm of $Z_{*}^{j}$ we use {\em chaining}, as described in
\cite{dP90}. By Lemma 3.4 of \cite{dP90} we know that for any $\F_{*}\subset \F_n$,
\begin{equation}  \label{E:Zstarorlicz}
\|\max_{f\in\F_{*}}Z_{f}\|_{\Psi}\le \sum_{i=1}^{\infty} \frac{C_{0}\sqrt{T}}{2^{i}} \sqrt{2+\log D(C_{0}\sqrt{T}/2^{i},\F_{*})},
\end{equation}
where the packing number $D(r,\F_{*})$ is the maximum number of points
that may be selected from $\F_{*}$, with no two of them having $\|\cdot\|_{2}$ distance smaller than $r/C_{0}$.
(In principle there would be an additional term for the norm of $Z_{f^{(0)}}$, but that is identically 0.)

The packing numbers for $\F_{T,k}$ are difficult to estimate precisely, particularly for large $k$,
but fortunately we can make do with fairly crude bounds, such as we state below as Lemma \ref{L:packingbound}. 
Substituting \eqref{E:packingbound} into \eqref{E:Zstarorlicz}, and using the fact that the bound is increasing in $k$, we see that for $j\le J-1$,
$$
D\Biggl( C_{0}\sqrt{T}/2^{i}\;,\bigcup_{m_{j}< k\le m_{j+1}}\F_{T,k} \Biggr)\le 
	T m_{j+1} d^{m_{j+1}+1}\min\left\{ \frac{T\e}{m_{j+1}}, \frac{T\e}{T/4^{i}-m_{j+1}}+2\e \right\}^{m_{j+1}}
$$
and
$$
D\Biggl( C_{0}\sqrt{T}/2^{i}\;,\bigcup_{m_{J}< k\le T}\F_{T,k} \Biggr)\le 
	T^{2} (8\e d \log d)^{T}
$$

Consider some fixed $j\le J-1$.
If we let $i_{*}=\lfloor\log_{4}T/2m_{j+1}\rfloor$, then for $i\le i_{*}-1,$
$$
\frac{4^{i}m_{j+1}}{T}\le\frac12, \quad\text{ and } \quad 
   2^{i_{*}}\ge \frac12 \sqrt{T/m_{j+1}}\ge \frac{1}{2d}\sqrt{\log(Td\e/m_{j+1})}.
$$

So for $j\le J-1$
\begin{align*}
\|Z_{*}^{j}\|_{\Psi}&\le \sum_{i=1}^{i_{*}-1} \frac{C_{0}\sqrt{T}}{2^{i}} \sqrt{\log 8m_{j+1}Td+m_{j+1}\log Td\e/(4^{-i}T-m_{j+1})} \\
 &\qquad + \sum_{i=i_{*}}^{\infty} \frac{C_{0}\sqrt{T}}{2^{i}} \sqrt{\log 8m_{j+1}Td+m_{j+1}\log (Td\e/m_{j+1})}\\
&\le \sum_{i=1}^{i_{*}-1} \frac{C_{0}\sqrt{T}}{2^{i}} \sqrt{\log (8m_{j+1}Td)+im_{j+1}\log 4-m_{j+1}\log (1-4^{i}m_{j+1}/T)} \\
 &\qquad +\frac{C_{0}\sqrt{T}}{2^{i_{*}-1}} \sqrt{\log (8m_{j+1}Td)+m_{j+1}\log (Td\e/m_{j+1})}\\
&\le \sum_{i=1}^{\infty} \frac{C_{1}\sqrt{T}}{2^{i}} \sqrt{im_{j+1}}
   \;+\; \frac{C_{1}\sqrt{T}}{2^{i_{*}-1}} \sqrt{m_{j+1}\log (Td\e/m_{j+1})}\\
&\le C_{2}\sqrt{Tm_{j+1}}
\end{align*}
for some constants $C_{1},C_{2}$.
By choosing $C_{2}$ appropriately we may ensure that this bound holds as well for $j=J$.

By definition of the Orlicz norm, stated as \eqref{E:orlicz}, this means that for $1\le j\le J$,
$$
\E\left[\exp\left\{(Z_{*}^{j})^{2}/C_{2}^{2}Tm_{j+1}\right\}\right]< 5.
$$
Applying Markov's inequality we have
$$
\P\{ Z_{*}^{j} > z\sqrt{Tm_{j+1}}\}\le 5\e^{-z^{2}/C_{2}^{2}}.
$$

Let $z=\max\{1,C_{2}\}$, and for any $T\ge \log^{2}\epsilon^{-1}$ define $A_{z,T}$ to be the event on which $Z_{*}^{j}\le z\sqrt{Tm_{j+1}(j+1)}$ for all $j$. Note that
\begin{equation} \label{E:Azprob}
\P(A_{z,T}^{\complement})\le \sum_{j=1}^{\infty} \e^{-z^{2}j/C_{2}^{2}}\le \left(\e^{z^{2}/C_{2}^{2}}-1\right)^{-1}.
\end{equation}
This bound is smaller than 1, from which it follows that $\P(A_{z,T})>0$.

We now take $m_{j}:=\lfloor 4Tj z^{2}/ \log^{2}\epsilon \rfloor$ as long as this is $<T/2$,
then set $m_{J}=\lfloor T/2\rfloor$ and $m_{J+1}=T$. We note that $J+1\le ( \log^{2}\epsilon)/4z^{2}$.
By the constraint on $z$, we have for $J-1\ge j\ge 1$
$$
\frac{T\e}{m_{j}}\le \frac{\e\log^{2}\epsilon/4}{jz^{2}-\log^{2}\epsilon/4T}\le
  \frac{ \log^{2} \epsilon}{j z^{2}},
$$
so
$$
\binom{T}{m_{j}}\le \left(  \frac{ \log^{2} \epsilon}{j z^{2}} \right)^{4Tj z^{2}/\log^{2}\epsilon}.
$$
Similarly, we have on $A_{z,T}$ the bound 
$$
Z_{*}^{j}\le z\sqrt{Tm_{j+1}(j+1)}\le 2z^{2}(j+1)T/\log\epsilon^{-1}.
$$
Substituting into \eqref{E:allterms2} we see that on the event $A_{z,T}$,
\begin{equation}  \label{E:Ynbound2}
\begin{split}
a(\epsilon)-a(0)\le \liminf_{T\to\infty}T^{-1}\log &\Bigl( 1 +T\sum_{j=1}^{\infty} \epsilon^{4Tjz^{2}/\log^{2}\epsilon} \left(  \frac{ \log^{2} \epsilon}{j z^{2}}
   \right)^{4Tj z^{2}/\log^{2}\epsilon} \e^{2z^{2}(j+1)T/\log\epsilon^{-1}} \\
   &\hspace*{3cm} +T^{2} (4\epsilon)^{T/2} \e^{Tz\sqrt{J+1}}   \Bigr).
\end{split}
\end{equation}
The summand may be bounded by
$$
\exp\biggl\{ -\frac{2Tz^{2}}{\log\epsilon^{-1}} \left(2j-2-\frac{5}{z\sqrt{j}} \right) \biggr\}\le \exp\biggl\{ \frac{T}{\log\epsilon^{-1}} \left( 4z^{2}+10z \right) \biggr\},
$$
while the additional term on the second line is bounded by
$$
T^{2} (16\epsilon)^{T/4}<1.
$$
Thus
\begin{equation}  \label{E:Ynbound3}
\begin{split}
a(\epsilon)-a(0)
&\le \liminf_{T\to\infty}T^{-1}\left(
  \log T +\log \Bigl( 2T^{-1}+ \left(1-\epsilon^{4}\right)^{-1}\exp\biggl\{ \frac{T}{\log\epsilon^{-1}} \left( 4z^{2}+10z \right) \biggr\} \Bigr)\right)\\
  &= \frac{4z^{2}+10z}{\log \epsilon^{-1}} 
\end{split}
\end{equation}
on the event $A_{z,T}$. Since the event has positive probability, and since
$a(\epsilon)-a(0)$ is almost surely constant, the bound holds with probability 1.

\begin{lemma} \label{L:packingbound}
For any $r$ and positive integers $T,k$, with $\sqrt{T}>r>k>0$,
\begin{equation}  \label{E:packingbound}
D(r,\F_{T,k})\le d^{k+1}\min\left\{ \frac{T\e}{k}, \frac{T\e}{(r/C_{0})^{2}-k}+2\e \right\}^{k}.
\end{equation}
\end{lemma}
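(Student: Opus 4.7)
The plan has two parts corresponding to the two branches of the minimum in \eqref{E:packingbound}.

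The first branch follows from a crude cardinality bound. I would simply observe that any element of $\F_{T,k}$ is determined by its set of $k$ changepoints in $\{1,\dots,T-1\}$, giving at most $\binom{T}{k}\le (Te/k)^k$ choices, together with its sequence of $k+1$ consecutive values in $\{0,\dots,d-1\}$, giving at most $d^{k+1}$ choices. Hence $D(r,\F_{T,k})\le \#\F_{T,k}\le d^{k+1}(Te/k)^k$, which is the first branch.

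For the second branch I would stratify $\F_{T,k}$ by the state sequence $\sigma=(s_0,\dots,s_k)$ with $s_{i-1}\ne s_i$ (of which there are at most $d^{k+1}$) and analyze each stratum separately. For fixed $\sigma$, the trajectories with state sequence $\sigma$ are parameterized by their $k$-tuples of changepoints $\mathbf{t}=(t_1<\cdots<t_k)$ in $\{1,\dots,T-1\}$. The key preliminary step is to establish the Hamming-versus-$\ell_{1}$ bound
\begin{equation*}
\|f-g\|_{2}^{2}\;\le\;\sum_{i=1}^{k}|t_i-t'_i|
\end{equation*}
for two trajectories $f,g$ in the same stratum with changepoint tuples $\mathbf{t},\mathbf{t}'$. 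I would prove this by a ``morphing'' argument: slide the $i$th changepoint of $f$ from $t_i$ to $t'_i$ one index at a time, noting that each individual slide alters at most $|t_i-t'_i|$ entries of the trajectory and so contributes at most $|t_i-t'_i|$ to the Hamming distance. Setting $N:=(r/C_{0})^{2}$, I would then choose $L:=\lfloor N/k\rfloor$ and partition $\{1,\dots,T-1\}$ into $M:=\lceil T/L\rceil$ blocks of length $L$. Each tuple receives a weakly increasing label $\phi(\mathbf{t})\in\{1,\dots,M\}^{k}$ recording the block containing each $t_i$. If $\phi(\mathbf{t})=\phi(\mathbf{t}')$ then $|t_i-t'_i|<L$ for all $i$, and therefore $\|f-g\|_{2}^{2}<kL\le N$; so any $(r/C_{0})$-separated packing in the $\sigma$-stratum contains at most one representative per value of $\phi$, giving at most $\binom{M+k-1}{k}$ trajectories. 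The routine bounds $M\le T/L+1\le Tk/(N-k)+1$ and $\binom{M+k-1}{k}\le (e(M+k-1)/k)^{k}$ then yield at most $(eT/(N-k)+2e)^{k}$ trajectories per stratum, and summing over the at most $d^{k+1}$ state sequences completes the second branch.

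The main technical point I expect to have to think about is the Hamming-versus-$\ell_{1}$ inequality itself; everything else is mechanical. The inequality is intuitive when the $s_i$ are pairwise distinct (in which case it is actually an equality), but one must confirm that accidental coincidences $s_i=s_j$ do not spoil it. Happily, such coincidences can only reduce the number of time indices on which $f$ and $g$ disagree, so the inequality survives in full generality. The hypotheses $k<r<\sqrt{T}$ are what one needs to ensure $N>k$ (so that $L\ge 1$) and $N<T$ (so that $M\ge 2$ and the partition is nontrivial).
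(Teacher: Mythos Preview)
Your proposal is correct and follows essentially the same strategy as the paper's proof: the trivial cardinality bound for the first branch, and for the second a partition of $\{1,\dots,T\}$ into blocks of length $\lfloor N/k\rfloor$ followed by pigeonholing over pairs (state sequence, block labels of the changepoints). The only cosmetic difference is that you isolate an explicit Hamming-versus-$\ell_{1}$ inequality $\|f-g\|_{2}^{2}\le\sum_{i}|t_{i}-t'_{i}|$, whereas the paper argues directly that two trajectories with the same state sequence and the same block labels must agree on every block that contains no changepoint, hence differ on at most $k\cdot\lfloor N/k\rfloor\le N$ positions.
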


\begin{proof}
Let $r'=(r/C_{0})^{2}$. Suppose that $r'>k$, let $j=\lfloor r'/k\rfloor$, and let $m=\lceil T/j\rceil$. Let $\F_{*}$ be the set of ordered (non-decreasing) sequences
of length $k$ from $\{0,\dots,m-1\}$, crossed with $\{0,\dots,d-1\}^{k+1}$, 
and define a map $(\phi,\psi):\F_{n,k}\to \F_{*}$ 
by letting $\{f\}_{i}$ be the $i$-th coordinate where $f$ changes, and defining
$$
\phi(f)_{i}= \bigl\lfloor \frac{\{f\}_{i}}{j}  \bigr\rfloor
$$
and $\psi(f)_{i}=f_{\{f\}_{i}}$; that is, the site that $f$ moves to at its $i$-th
change.

If $f$ and $f'$ are two elements of $\F_{n,k}$ with $\phi(f)=\phi(f')$
and $\psi(f)=\psi(f')$, 
then $f_{i}=f'_{i}$ as long as $\lfloor i/j\rfloor\notin \phi(f)$, 
since any $t\notin \phi(f)$ corresponds to a span of $tj,tj+1,\dots,tj+j-1$
where $f_{tj}=f'_{tj}$ (because they started with $f_{0}=f'_{0}$, and the number
of changes in $f_{0},\dots,f_{tj-1}$ is the same as the number
of changes in $f'_{0},\dots,f'_{ti-1}$). Thus
$d(f,f')\le k\dot j \le r'$, meaning that $\|f-f'\|_{2}\le C_{0}\sqrt{r'}=r$. By the
pigeonhole principle, any subset of $\F_{n,k}$ of size greater than $\# \F_{*}$
has points with $\|\cdot\|$-separation no more than $r$. Hence
$$
D(r,\F_{n,k})\le \#\F_{*} = \binom{m+k}{k}.
$$

Combining this with the trivial bound $D(r,\F_{n,k})\le \#\F_{n,k}=\binom{n}{k}$ and the bound
$$
\binom{a}{b}\le \left(\frac{a\e}{b}\right)^{b}
$$
completes the proof.
\end{proof}

\section{Simulations}  \label{sec:simulation}
We illustrate the results with some simulation tests. We want to show
that in the case where two sites have identical mean expected log growth
(or in the diapause case) the changes in $a$ are approximately like
$1/\log\epsilon^{-1}$ for $\epsilon$ close to 0; and that in the case where
the maximum expected log growth rate occurs at only a single site the changes are like a power of
$\epsilon$, and that the power is approximately as predicted.

\subsection{Diapause case} \label{sec:diapausesim}
We begin with a very simple $2\times 2$ example:
$$
M_{t}=
\begin{pmatrix}
 0& B_{t}\\
 S_{t}&0
\end{pmatrix}, \qquad
A_{t}=
\begin{pmatrix}
1&0\\0&0
\end{pmatrix}, \qquad\text{with }
S_{t}\sim \operatorname{Unif}(0.05,0.99),\quad B_{t} \sim \operatorname{Gam}(5,2),
$$
with $S_{t}$ and $B_{t}$ independent. We have $a(0)=\frac12(\E[\log S_{t}]+\E[\log B_{t}])=-0.0193$. 

\begin{table}[ht]
\centering
\begin{tabular}{ccc}
  \toprule
 $\epsilon$&$1/\log\epsilon^{-1}$&$a(\epsilon)$\\
  \midrule
0.500 & 1.443 & 0.305 \\ 
  0.400 & 1.091 & 0.256 \\ 
  0.300 & 0.831 & 0.206 \\ 
  0.200 & 0.621 & 0.153 \\ 
  0.100 & 0.434 & 0.097 \\ 
  0.050 & 0.334 & 0.065 \\ 
  0.010 & 0.217 & 0.028 \\ 
  0.005 & 0.189 & 0.022 \\ 
  0.001 & 0.145 & 0.012 \\ 
  $10^{-4}$ & 0.109 & 0.003 \\ 
   $10^{-5}$ & 0.087 & -0.001 \\ 
  $10^{-6}$ & 0.072 & -0.005 \\ 
  0 & 0.000 & -0.019 \\ 
   \bottomrule
\end{tabular}
\caption{Simulated diapause example} 
\label{T:diapausesim}
\end{table}

The results are tabulated in Table \ref{T:diapausesim}, for values of $\epsilon$ down to $10^{-6}$. In Figure \ref{F:diapausesim} we plot $a(\epsilon)$
against $1/\log \epsilon^{-1}$, and see that for small values of $\epsilon$ the values are very close to a line (with slope approximately $0.2$.

\begin{figure}[ht]
\begin{center}
\includegraphics[width=12cm]{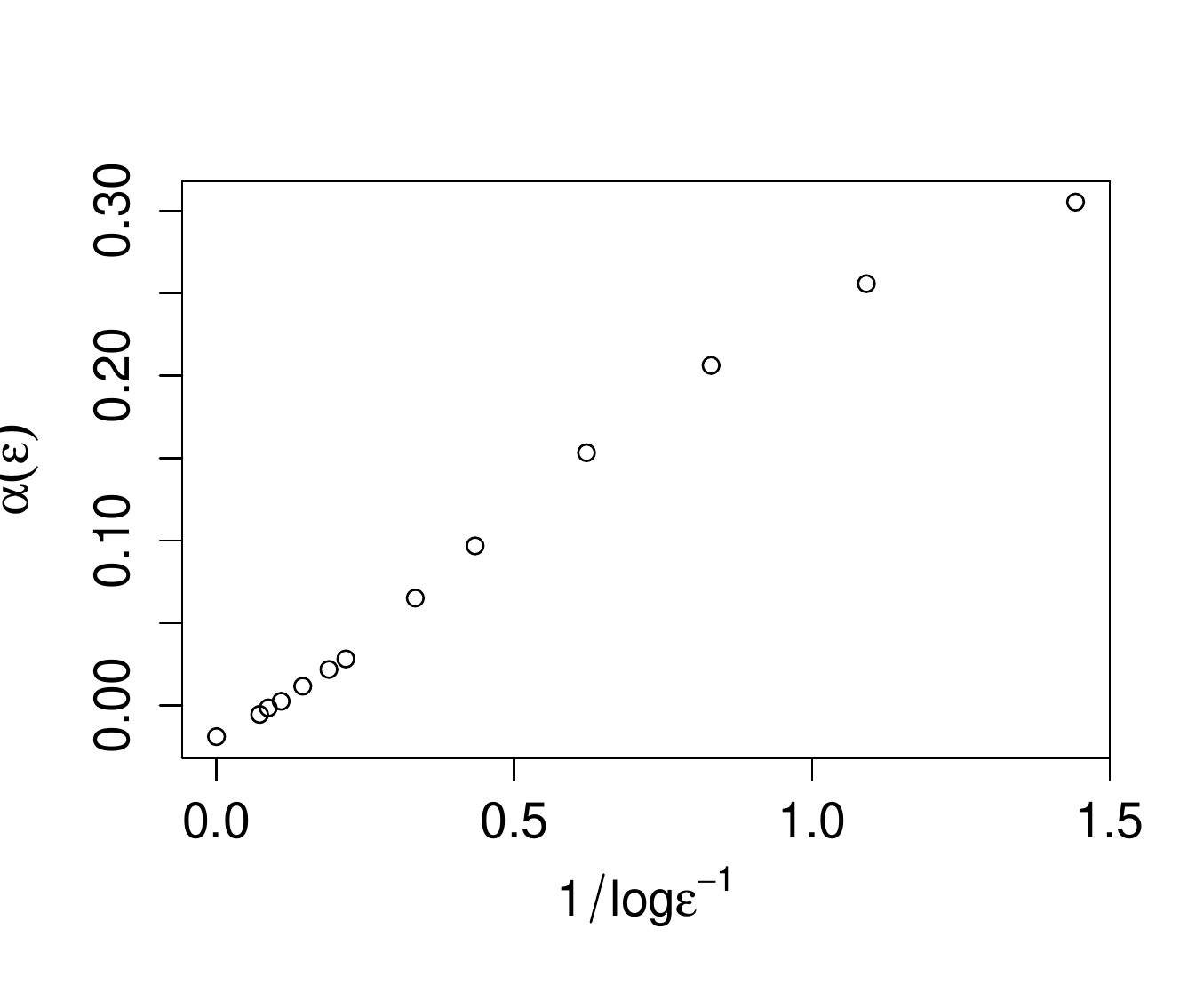}
\caption{Simulated diapause example.}
\label{F:diapausesim}
\end{center}
\end{figure}

\subsection{Migration case}  \label{sec:migrationsim}
We now consider a $3\times 3$ example:
$$
M_{t}(\mu,\sigma^{2})=
\begin{pmatrix}
 e^{\sigma Z_{t}^{(0)}}& 0&0\\
0& e^{\sigma Z_{t}^{(1)}-\mu}&0\\
0&0& e^{\sigma Z_{t}^{(2)}-0.2}
\end{pmatrix}, \qquad
A_{t}(C)=
\begin{pmatrix}
0&C&1\\1&0&C\\C&1&0
\end{pmatrix},
$$
with $Z_{t}^{(0)},Z_{t}^{(1)},Z_{t}^{(2)}$ i.i.d.\  standard normal random variables,
and $C$ is a nonnegative constant. If $C=0$ then the migration graph is
a cycle of length 3, so $\kappa_{1}=\kappa_{2}=3$; if $C>0$ then $\kappa_{1}=\kappa_{2}=2$.

We consider four different cases for $(\mu,\sigma^{2},C)$: $I: (-0.1,0.5,1)$, 
$II: (-0.1,0.5,0)$, $III: (-0.1,1,1)$, and $IV: (0,0.5,1)$. In all cases $a(0)=0$.
The results of this
paper predict that case IV should be like the diapause example in section \ref{sec:diapausesim}, with $a(\epsilon)$ behaving like $c/\log\epsilon^{-1}$
for some constant $c$, when $\epsilon$ is small.

The first three cases should have $\log a(\epsilon)/\log \epsilon^{-1}$
converging to a constant as $\epsilon\downarrow 0$. We have $\tmu^{(1)}=0.1$
in all three cases. For cases I and II we have $\rho^{(1)}=\rho_{*}^{(1)}=0.1$, so
that the power for case I is between
$$
2\cdot 2 \cdot 0.1=0.4 \qquad \text{and}\qquad \frac{2\cdot 2\cdot 0.1}{1+2\cdot 0.1}
=\frac13,
$$
and for case II is between
$$
2\cdot 3 \cdot 0.1=0.6 \qquad \text{and}\qquad \frac{2\cdot 3\cdot 0.1}{1+2\cdot 0.1}
=0.5.
$$
For case III $\rho^{(1)}$ is decreased to 0.05, so the power is between
$$
2\cdot 2 \cdot 0.05=0.2 \qquad \text{and} \qquad \frac{2\cdot 2\cdot 0.05}{1+2\cdot 0.05}=\frac{2}{11}.
$$

We plot some simulated results in Figures \ref{F:diapausesim1} through  \ref{F:diapausesim3},
plotting the $\log a(\epsilon)$ against $\log\epsilon^{-1}$. In the limit as $\epsilon\to 0$
this should approach a line whose slope is in the range given for the power of $\epsilon$ 
in Theorem \ref{T:diffrate1}. We plot lines with those slopes in each figure, and see that
in the lowest range of $\epsilon$ (we take it down to $\epsilon=10^{-6}$) the slope comes down
close to the upper limit, but is still higher. Of course, this is completely consistent with
the true exponent being at the upper limit, particularly since
we don't know anything yet about how
small $\epsilon$ would need to be before the asymptotic slope becomes apparent.

\begin{figure}[ht]
\begin{center}
\includegraphics[width=10cm]{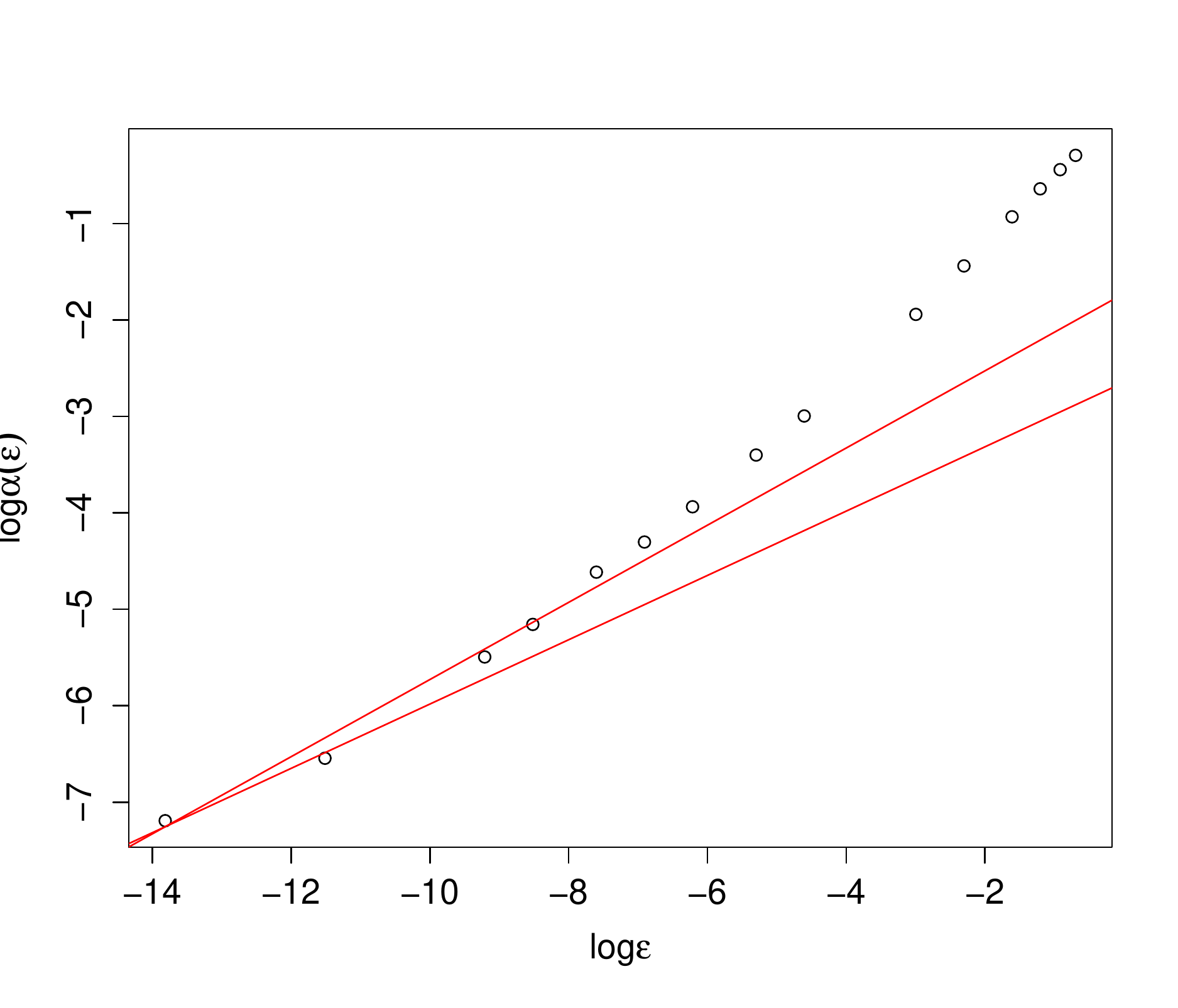}
\caption{Simulated migration example with path length 2. The red lines have slope $0.4$ and $1/3$.}
\label{F:diapausesim1}
\end{center}
\end{figure}

\begin{figure}[ht]
\begin{center}
\includegraphics[width=10cm]{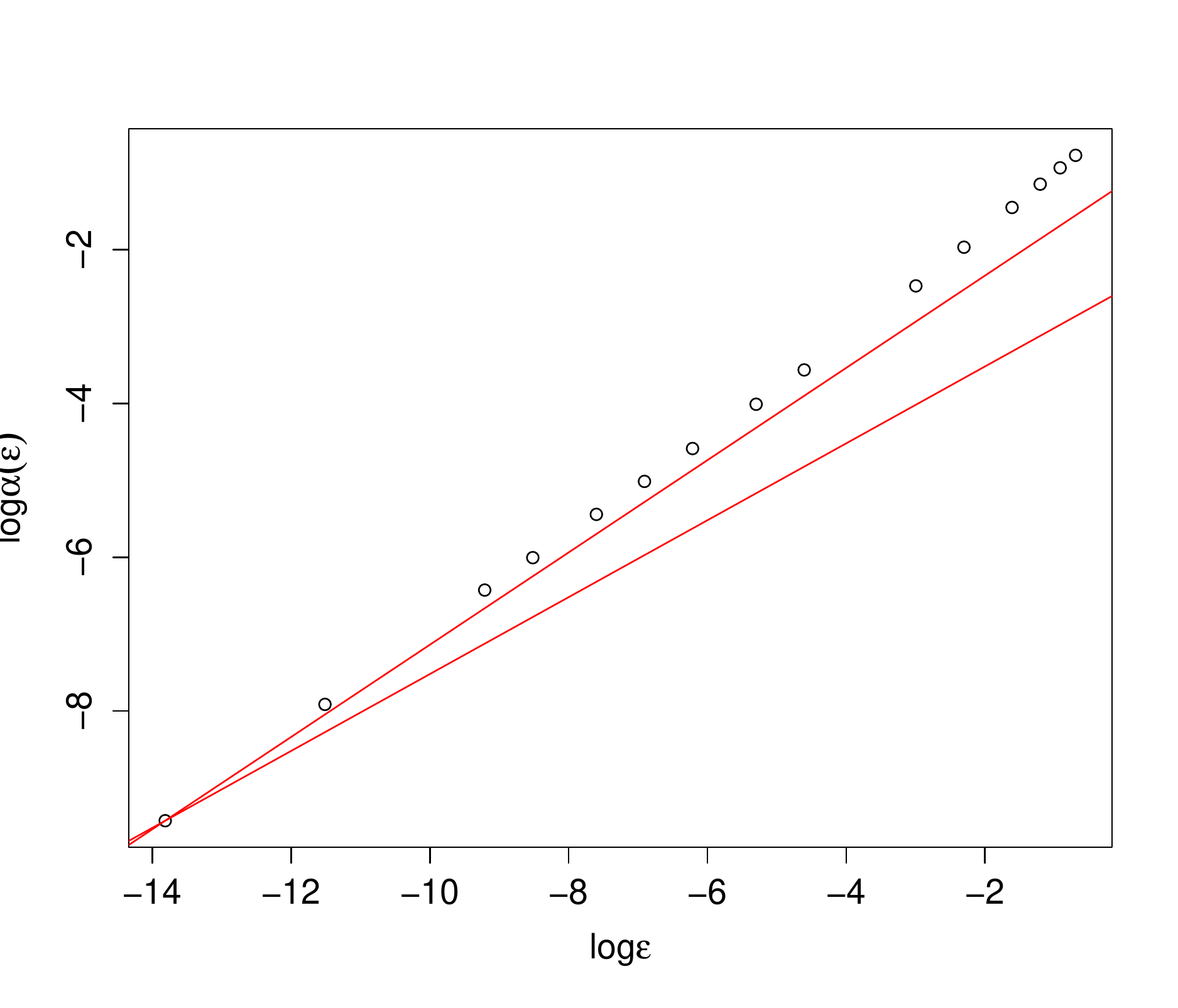}
\caption{Simulated migration example with path length 3. The red lines have slope $0.5$ and $0.6$.}
\label{F:diapausesim2}
\end{center}
\end{figure}

\begin{figure}[ht]
\begin{center}
\includegraphics[width=10cm]{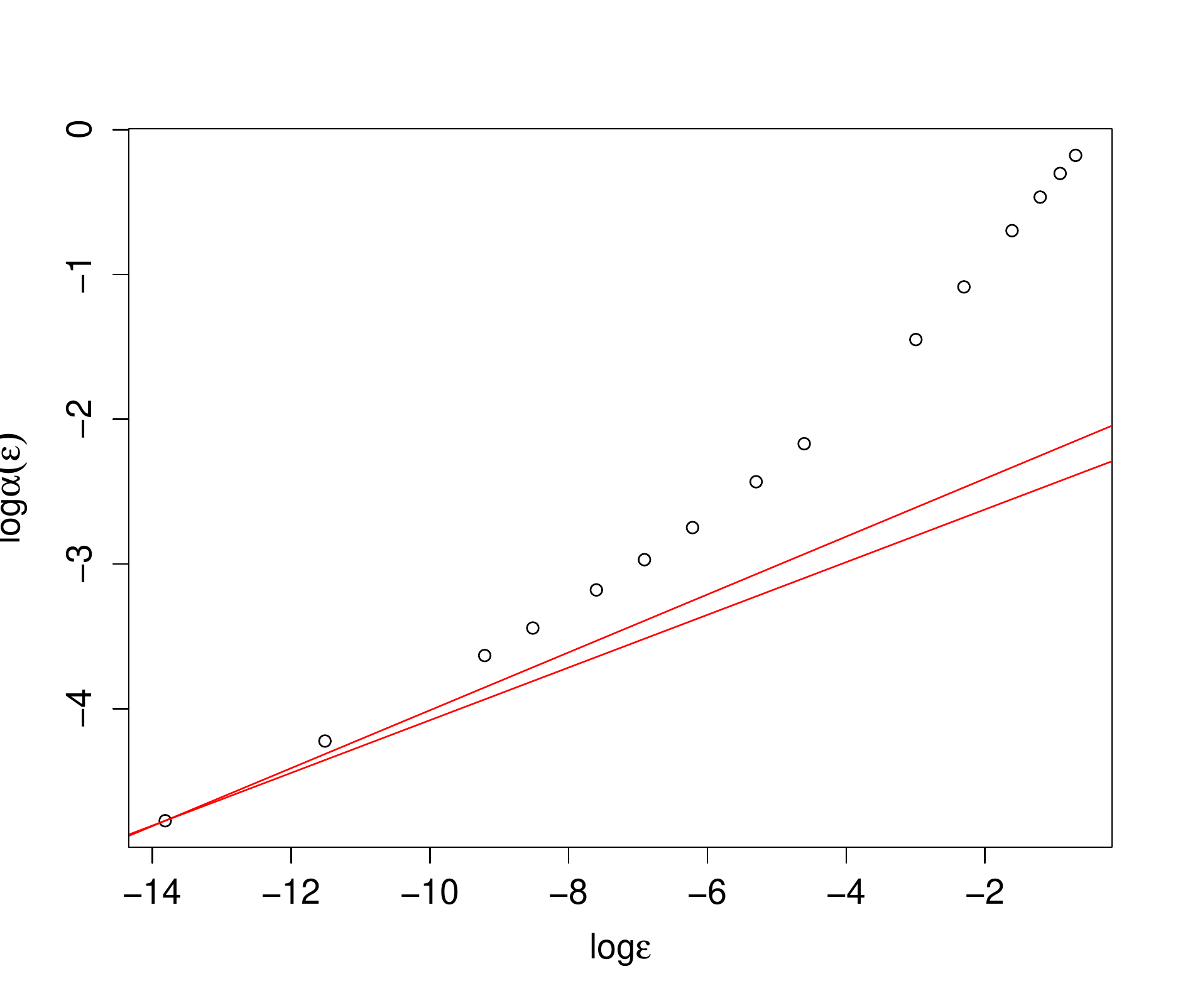}
\caption{Simulated migration example with path length 2, and $\sigma^{2}=1$. The red lines have slope $0.2$ and $2/11$.}
\label{F:diapausesim3}
\end{center}
\end{figure}

\bibliographystyle{alpha}
\bibliography{diapausebib}

\typeout{get arXiv to do 4 passes: Label(s) may have changed. Rerun}

\end{document}